\documentclass[journal]{ieeetran}
\usepackage{cite}
\usepackage{amsmath,amssymb,amsfonts}
\usepackage{amsthm}
\usepackage{algorithm}
\usepackage{adjustbox}
\usepackage{algorithm}
\usepackage{algpseudocode}

% save space
\usepackage[font=small]{caption}
%\addtolength{\parindent}{-1mm}
% \usepackage{amssymb}

\usepackage{graphics} % for pdf, bitmapped graphics files
\usepackage{graphicx}
\usepackage{epsfig} % for postscript graphics files

\usepackage{tikz}
 \usetikzlibrary{automata,chains}
 
\usepackage{float}

\newtheorem{thm}{Theorem}
\newtheorem{lem}[thm]{Lemma}
\newtheorem{prop}[thm]{Proposition}
\newtheorem{coro}[thm]{Corollary}

\newtheorem{prob}{Problem}

\begin{document}
\title{Stealthy hacking and secrecy of controlled state estimation systems with random dropouts}
\author{Jingyi Lu, Daniel E. Quevedo, Vijay Gupta, and Subhrakanti Dey
\thanks{J. Lu is with the Department of Electrical Engineering (EIM-E), Paderborn University, Paderborn, Germany. E-mail:
	jingyi.lu@upb.de. }
\thanks{D. E. Quevedo is with the School of Electrical Engineering and Robotics, Queensland University of Technology (QUT), Brisbane, Australia. Email: daniel.quevedo@qut.edu.au.}
\thanks{V. Gupta is with Department of Electrical Engineering, University of Notre Dame, Notre Dame, USA. E-mail: vgupta2@nd.edu.}
\thanks{S. Dey is with Hamilton Institute, National University of Ireland, Maynooth, Ireland.
	E-mail: Subhra.Dey@signal.uu.se.}
}

\maketitle

\begin{abstract}
We study the maximum information gain that an adversary may obtain through hacking without being detected. Consider a dynamical process observed by a sensor that transmits a local estimate of the system state to a remote estimator according to some reference transmission policy across a packet-dropping wireless channel equipped with acknowledgments (ACK). An adversary overhears the transmissions and proactively hijacks the sensor to reprogram its transmission policy. We define perfect secrecy as keeping the averaged expected error covariance bounded at the legitimate estimator and unbounded at the adversary. By analyzing the stationary distribution of the expected error covariance, we show that perfect secrecy can be attained for unstable systems only if the ACK channel has no packet dropouts. In other situations, we prove that independent of the reference policy and the detection methods, perfect secrecy is not attainable. For this scenario, we formulate a constrained Markov decision process to derive the optimal transmission policy that the adversary should implement at the sensor, and devise a Stackelberg game to derive the optimal reference policy for the legitimate estimator. 
\end{abstract}

\begin{IEEEkeywords}
remote state estimation, constrained Markov decision process, system security and privacy, stealthy attack, bilevel programming
\end{IEEEkeywords}

\section{Introduction}
\label{sec:introduction}

\IEEEPARstart{C}{yber}-physical systems (CPSs) have been widely applied to many critical infrastructures including smart grids, transportations, and industrial control systems et al. CPSs can provide efficiency and versatility by merging computing and communication with the physical world. However, the vulnerability of CPSs to various cyber-attacks, which compromise measurements and actuator data availability, integrity, and confidentiality, brings huge potential security and privacy issues\cite{chong2019tutorial}.  Numerous malware targeting industrial control systems have been discovered, such as the notorious Stuxnet, Triton, and Havex \cite{ma2019stealthy,lee2020keeping}. By launching a phishing attack to gain remote access to critical components in an autonomous system, the malware is capable of remote monitoring and potentially taking full control of the target for espionage and sabotage \cite{kushner2013real}. This will often circumvent a digital safety system, thus, potentially serving as a detrimental cyber weapon of mass destruction. 

%Attacks can be briefly grouped into three categories: (1) denial-of-service attack; (2) integrity attack; and (3) eavesdropping attack. The first two types aim at degrading system performance, while eavesdropping attack targets in intercepting confidential information.

Considering system security and privacy of dynamical systems, a surge of research has been carried out to investigate attack patterns. For example, optimal jamming attacks targeting data availability were discussed in 
\cite{zhang2015optimal} and \cite{zhang2015optimal2} within the framework of Linear Quadratic Gaussian control and remote state estimation. False data injection attacks in electric energy systems  were studied in \cite{xie2010false} and \cite{liu2011false} to destroy integrity. Eavesdropping attacks were explored in \cite{leong2018transmission} and \cite{tsiamis2019state} for confidentiality violation. From the perspective of a defender, strategies based on encryption \cite{darup2018towards,alexandru2019encrypted} and hypothesis testing \cite{mo2009secure,ren2018binary} were proposed for prevention and detection of the attacks. Resilient approaches that take into account the interactive decision-making process between an agent and its adversary were developed through game theory to counter the malicious attacks \cite{miao2013stochastic,yuan2013resilient,li2015jamming,yuan2016resilient}.

In this work, we focus on confidentiality issues in remote state estimation of dynamical systems. In our setup, a sensor takes measurements of the process and forwards its local estimate of the system state to a remote estimator across a packet-dropping channel; an eavesdropper overhears the transmission to intercept system information (See Fig. \ref{setup1}). This setup was investigated in \cite{ leong2018use,leong2019information,tsiamis2017state,ding2020remote, yang2020encoding}. To preserve privacy for the legitimate estimator, Tsiamis et al. \cite{tsiamis2017state} introduced a control-theoretic definition of ``perfect secrecy'' which requires that the user's expected error remains bounded while the eavesdropper’s expected error grows unbounded. The authors proved that by applying a secrecy mechanism that randomly withholds sensor information, perfect secrecy can be ensured on the premise that the user's packet reception rate is
larger than the adversary's. Our previous work in  \cite{leong2018transmission} shows that the perfect secrecy can be achieved without the prerequisite on the packet reception rate, provided acknowledgments on successful reception at the legitimate estimator are available to the sensor. The work \cite{leong2018transmission} devised a deterministic threshold-type event-triggered policy to realize perfect secrecy. Beyond that, schemes that use artificial noise, encoding-decoding mechanism, and a combination of the two were investigated in \cite{leong2018use, tsiamis2019state,yang2020encoding} for protection of the state information. These works focus on the prevention of information leakage from the position of legitimate estimator. From the adversary's point of view, our recent work \cite{ding2020remote} proposed an active eavesdropping scheme assuming that the eavesdropper can overhear both the transmission and acknowledgment channel, and is capable of switching between a jamming mode and an eavesdropping mode. By carefully scheduling the two modes, a refined estimation performance can be achieved at the eavesdropper.

Motivated by the mechanism of the malware Stunex and Havex, in the present work, we consider a more capable adversary that can eavesdrop on the transmission channel and hack the sensor to reprogram the transmission policy (see Fig. \ref{setup}). This type of adversary may be interested in degrading the performance of the remote estimator, as investigated in \cite{cheng2019event}. In our current work, we focus on an adversary that intends to overhear information in order to form its own state estimate, rather than bringing loss to the legitimate estimator. Note that in contrast to the active eavesdropper in \cite{ding2020remote}, the adversary considered in the current formulation, whilst having a similar goal,   is more capable since the jamming strategies of \cite{ding2020remote}  can also be realized as a special case by withholding certain transmissions. In addition, note that without considering stealthy issues, the adversary in this work can achieve the minimal averaged expected error covariance (AEEC) after launching an intrusion to the sensor which is impossible to the active eavesdropper in \cite{ding2020remote}. In this sense, the attack scenario considered in the present work is more advantageous to the adversary and more challenging to the legitimate party. 

\begin{figure}[t]
	\centering
	\includegraphics[width=0.4\textwidth]{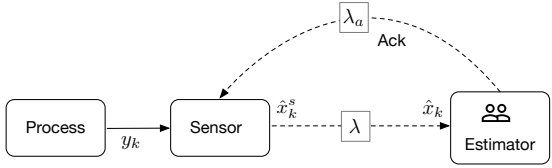}
	\caption{Remote state estimation across packet-dropping channels.} \label{setup1}
\end{figure}

 %Intrinsically, we aim to study how much an adversary can cheat while being stealthy. 

In this paper, we study the maximum information gain that an adversary may obtain through hacking without being detected.  Our aim is to elucidate the tradeoffs between information leakage and detection efforts. Specifically, to derive the optimal malicious policy from the adversary's point of view, we model the transmission process as a constrained Markov decision process (MDP) where the adversary's AEEC is minimized and a stealthy constraint is incorporated. Different from \cite{ding2020remote} and \cite{cheng2019event}, where stealthy constraints are designed for some specific detection algorithms,  we adopt a notion of stealthiness that is independent of any particular detection algorithm being employed. More precisely, we focus on an information-theoretic quantity relating the marginal distribution of the observations at the legitimate estimator before and after the intrusion according to the MDP discussed above. Similar to \cite{bai2015security,bai2017data,che2013reliable,leong2019game}, this notion allows us to obtain a fundamental bound on the stealthiness of the adversary by considering the underlying hypothesis detection problem of identifying if an adversary is present, rather than limiting ourselves to specific detection algorithms used to solve that detection problem. In this framework, we show by analyzing the convergence properties of the stationary distribution that the reliability of the acknowledgment (ACK) channel largely determines the confidentiality. For situations where the ACK channel has no packet dropouts, we prove the existence of a reference policy making the adversary's AEEC unbounded even after the intrusion. In contrast, if the ACK channel has packet dropouts, we show that there always exists a malicious policy that ensures bounded AEEC at the adversary against any stealthy tolerance and any reference policy that gives a bounded AEEC at the legitimate estimator. This property also guarantees the feasibility of the constrained MDP considered and allows us to solve the infinite-state MDP with a finite linear program. 

\begin{figure}[t]
	\centering	\includegraphics[width=0.4\textwidth]{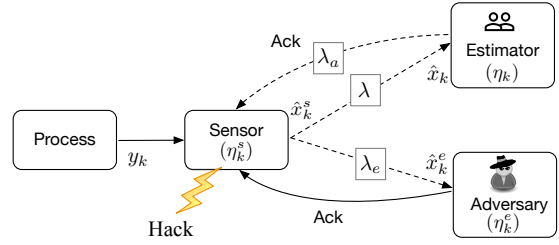}
	\caption{Problem setup: A sensor transmits the local estimate to a legitimate remote estimator. An adversary overhears the transmissions for espionage. It is  capable of (1) hacking the sensor; (2) sending an acknowledgement to the sensor; (3) reprograming the transmission policy according to the acknowledgments from the estimator and the adversary.} \label{setup}
\end{figure}

As a defense mechanism, we explore a resilient design of the optimal reference policy from the perspective of the legitimate estimator. The estimator seeks to optimize its estimation performance and reduce information leakage. 
Assuming that the adversary always commits to the optimal malicious policy for a given reference policy, we formulate this hierarchical decision-making problem as a static Stackelberg game, where the legitimate estimator acts as a leader and the adversary acts as a follower. This results in a bilevel program. In particular, its lower level takes the parameters of the malicious policy as the decision variables and minimizes the adversary's AEEC subject to the stealthy constraint. Its upper level takes the parameters of the reference policy as the decision variable and minimizes a linear combination of the estimator's AEEC and the negative of the adversary's AEEC. Meanwhile, the lower level problem is embedded as a constraint into the upper-level problem. Bilevel programmings are widely known to be hard due to their intrinsic non-convexity and non-differentiability \cite{sinha2017review}. 
In this paper, we can simplify the problem aided by the duality theorem \cite{anandalingam1990solution}, the linear structure of the objective function, and the special shape of the feasible region. We reformulate the bilevel problem as a biconvex program \cite{gorski2007biconvex}, whose optima stay at extreme points. Based on this property, the optimal solution can be achieved by enumerating the extreme points. 
A depth-first branch-and-bound algorithm is further devised to accelerate the enumeration. Numerical examples are provided for validation.

The main contributions of this paper can be briefly summarized as follows.
\begin{itemize}
	\item We study a new attack policy for system confidentiality in remote state estimation. An  optimal covariance-dependent malicious transmission policy is presented in the framework of a constrained Markov decision process. % with a stealthy constraint
	\item We analyze the decisive role of the reliability of the ACK channel in system confidentiality. We show that if the ACK channel has no packet dropouts, then perfect secrecy can be achieved under a stealthy attack; otherwise, perfect secrecy can never be realized.  
	\item From the legitimate estimator's perspective, we derive an optimal resilient reference policy by formulating a suitable  bilevel program. This is then solved by our proposed depth-first branch-and-bound algorithm.
\end{itemize}

The remaining parts of the paper are arranged as follows: Section \ref{sec2} provides the system model and problem setup; Section \ref{sec3} analyzes the performance at the adversary before an intrusion is launched; Section \ref{sec4} details the design of the stealthy constraint and derivation of the optimal intrusion policy; Section \ref{sec5} formulates a bilevel program for the synthesis of the optimal reference policy; Section \ref{sec6} presents the results of the numerical examples; finally, Section \ref{sec7} concludes.

Notations: $\mathbb{R}^n$ is the set of $n$ dimensional vectors. $\mathbb{N}$ is the set of natural numbers. $X\in\mathbb{R}^{n_1\times n_2}$ indicates that $X$ is a $n_1$ by $n_2$ matrix. Given that $X\in\mathbb{R}^{n\times n}$, $X\geq 0$ denotes that the matrix $X$ is positive semidefinite. $\text{Tr}(X)$ refers to the trace of $X$. $\sigma_{\max}(X)$, $\lambda_{\max}(X)$, $\lambda_{\min}(X)$ denote the singular value, the largest eigenvalue, and the smallest eigenvalue, respectively.  $\mathbb{P}(Y\mid z)$ denotes the probability of $Y$ conditional on $z$. $\mathbb{E}(Y)$ is the mathematical expectation.

\section{PROBLEM FORMULATION}
\label{sec2}
\subsection{System model}

Consider a discrete-time linear system as
\begin{align}
x_{k+1} = A x_{k} + w_k, \label{sys}
\end{align}
where $x_k\in\mathbb{R}^{n_s}$ is the system state. The state $x_k$ is measured by a sensor as
\begin{align}
y_k = C x_k + v_k,\label{sys2}
\end{align}
with $y_k\in\mathbb{R}^{n_y}$ denoting the measurement. $w_k\in\mathbb{R}^{n_s}$ and $v_k\in\mathbb{R}^{n_y}$ are process and measurement noises. Assume $w_k$ and $v_k$ are mutually independent Gaussian process with zero mean. The covariance of $w_k$ and $v_k$ are denoted as $Q\in\mathbb{R}^{n_s\times n_s}$ and $R\in\mathbb{R}^{n_y\times n_y}$. Moreover, it is assumed that the pair $(C,A)$ is observable and $(A,\sqrt{Q})$ is controllable.

\par As depicted in Fig. \ref{setup1}, we consider a sensor makes measurements of the system output and optimally estimates system states with a Kalman filter. Denote the posterior local state estimate as 
\begin{align}
\hat{x}_{k}^s \triangleq \mathbb{E}[x_k\mid y_0,\dots,y_k].\nonumber
\end{align}
The corresponding error covariance $P_k^s\triangleq \mathbb{E}[(x_k-x_k^s)(x_k-x_k^s)^\top\mid y_0,\dots,y_k]$ exponentially converges to a steady state \cite{kailath2000linear}. Denote the steady state value as $\overline{P}$. After obtaining $\hat{x}_k^s$, the sensor broadcasts the estimate to a remote estimator. Denote an indicator variable $\nu_k$ as the transmission command such that the sensor transmits if $\nu_k=1$ and keeps silent if $\nu_k=0$. Let $\gamma_k$ be an indicator variable for successful reception, i.e. $\gamma_k=1$
denoting a successful reception and $\gamma_k=0$ implying the occurrence of a packet dropout. Assume $\gamma_k$ is i.i.d Bernoulli, namely,
\begin{align}
\mathbb{P}[\gamma_k=1\mid \nu_k=1]=\lambda\label{trans1}.
\end{align}
Define $\mathcal{I}_k$ as a collection of historical information, i.e.
\begin{align}\mathcal{I}_k\triangleq\{\nu_0\gamma_0,\dots,\nu_k\gamma_k, \nu_0\gamma_0\hat{x}_0^s,\dots,\nu_k\gamma_k\hat{x}_l^s\}.\nonumber
\end{align}
The optimal estimate at the legitimate estimator, denoted as $\hat{x}_k$, is in the form of
\begin{align}
\hat{x}_{k}   \triangleq \mathbb{E}[x_k\mid \mathcal{I}_k]= \left\{\begin{array}{cl} \hat{x}_{ k|k}^s, & 
\gamma_{k} \nu_k= 1 \\ A \hat{x}_{k-1}, & 	\gamma_{k} \nu_k= 0.
\end{array}  \right. \label{estimate}
\end{align} 
Correspondingly, the error covariance at the estimator, denoted as $P_k$, is in the form of 
\begin{align}
P_{k} \triangleq& \mathbb{E}[(x_k-\hat{x}_k)(x_k-\hat{x}_k)^\top\mid \mathcal{I}_k]\nonumber\\
= &\left\{\begin{array}{cl}  \overline{P}  &   \textnormal{
	$\gamma_{k} \nu_k=  1$} \\ 
f(P_{k-1}),  &
\textnormal{
	$\gamma_{k} \nu_k=  0$} \end{array} \right. . \label{error}
\end{align}
Here 
\begin{align}
f(X) \triangleq A X A^T + Q. \label{def_f}
\end{align}
Denote $\eta_k$ as the holding time at the receiver, which is defined as the number of time steps since the last successful receipt, i.e.
\begin{align}
\eta_k  \triangleq \min \{\eta\geq 0: \gamma_{k-\eta}=1\}.\label{hold}
\end{align} 
$\eta_k$ satisfies the recursion
\begin{align}
	\eta_k=\left\{\begin{array}{cc}
		0 & \nu_k\gamma_k=1\\
		\eta_{k-1}+1 & \nu_k\gamma_k=0
	\end{array}
	\right. .\label{eta}
\end{align}
Using Eq. (\ref{error}) and Eq. (\ref{hold}), $P_k$ can be written as a function of $\eta_k$ as
\begin{align}
P_k = f^{\eta_k}(\bar{P}).\nonumber
\end{align}
Here $f^n(\cdot)$ denotes the $n$-th fold composition of $f(\cdot)$ with $f^0(X)=X$. As proved in \cite{shi2012scheduling}, all possible values of $P_k$ form a totally ordered set $\mathbb{S}$, i.e.
\begin{align}
\mathbb{S} \triangleq \{ \overline{P}, f(\overline{P}), f^2 (\overline{P}),f^3 (\overline{P}), \dots\}\nonumber
\end{align}
and
\begin{align}
\bar{P}\leq f(\bar{P})\leq f^2(\bar{P})\leq \dots.\label{p_mono}
\end{align}
After receiving the packet, the estimator will send an acknowledgment (ACK) back to the sensor. This enables the sensor to use information about $P_k$ to make scheduling decisions. The ACK channel is often assumed to be reliable\cite{leong2018transmission,ding2020remote}. 
We go beyond that by taking into account the packet dropouts across this channel. In the following sections, we will show that the reliability of the ACK channel plays a very crucial role in system privacy.
Denote $\lambda_a$ as the successful reception probability of the ACK signal. Let $\gamma^a_k$ be an indicator variable such that
\begin{align}
\gamma^a_k=\left\{\begin{array}{cc}
1 & \text{ACK is received by the sensor}\\
0 & \text{a packet dropout occurs}
\end{array}\right. .\nonumber
\end{align}
Then, the conditional distribution of $\gamma_k^a$ is given as
\begin{equation}
\begin{aligned}
\mathbb{P}(\gamma^a_k=1\mid \gamma_k=1) = & \lambda_a\\
\mathbb{P}(\gamma^a_k=0\mid \gamma_k=1) = & 1-\lambda_a\\
\mathbb{P}(\gamma^a_k=0\mid \gamma_k=0) = & 1.
\end{aligned} \label{gamma}
\end{equation}

\subsection{Problem setup}
 We consider a scenario that an adversary  can overhear the transmissions and hack the sensor  to re-schedule the transmissions. 
 Let $\gamma_k^e$ be an indicator of a successful reception at the adversary, such that $\gamma_k^e=1$ if the transmission is overheard by the adversary and $\gamma_k^e=0$ otherwise. Assume the successful reception probability is $\lambda_e$. We have
\begin{equation}
\mathbb{P}(\gamma_k^e=1\mid \nu_k=1) = \lambda_e.\label{trane}
\end{equation}

\par Similar to Eq. (\ref{estimate}) and Eq. (\ref{error}), assuming that the processes $\gamma_k^e$ and $\gamma_k$ are mutually independent, the adversary's optimal estimate is in the form of 
\begin{align}
\hat{x}_{k}^e  = & \left\{\begin{array}{cl} \hat{x}_{ k}^s, & 
\gamma_{k}^e \nu_k= 1 \\ A \hat{x}_{k-1}^e, & 	\gamma_{k}^e \nu_k= 0, 
\end{array}  \right. \label{eav:state} \\
P_{k}^e  = & \left\{\begin{array}{cl}  \overline{P}  &   \gamma_{k}^e \nu_k=  1 \\ 
f(P_{k-1}^e),  &
\gamma_{k}^e \nu_k=  0 \end{array} \right. \label{eav:error} ,
\end{align}
with $\hat{x}_k^e\triangleq \mathbb{E}(x_k\mid \mathcal{I}_k^e)$,  $P_k^e=\mathbb{E}[(x_k-\hat{x}_k)(x_k-\hat{x}_k)^\top\mid \mathcal{I}_k^e]$, and 
$\mathcal{I}_k\triangleq\{\nu_0\gamma_0^e,\dots,\nu_k\gamma_k^e, \nu_0\gamma_0^e\hat{x}_0^s,\dots,\nu_k\gamma_k^e\hat{x}_k^s\}.$

As depicted in Fig. \ref{setup}, after launching the attack to the sensor, the adversary sends an acknowledgment signal back to the sensor such that the holding time at the adversary, denoted as $\eta_k^e$, can be inferred from the ACKs according to (\ref{eta_e}),
\begin{align}
\eta_k^e=\left\{\begin{array}{cc}
0 & \nu_k\gamma_k^e=1\\
\eta_{k-1}^e+1 & \nu_k\gamma_k^e=0
\end{array}
\right. ,\label{eta_e}
\end{align}
such that the transmissions can be re-scheduled with $\eta_k^e$ taken into account. We assume the ACK channel between the adversary and the sensor has no packet dropouts. In this case, $\eta_k^e$ is exactly known by the sensor, which is the most advantageous scenario to the adversary. Studying this scenario will help us analyze the  maximal information leakage to the adversary. 

We evaluate the estimation performance with an average of the expected error covariance (AEEC) defined as
\begin{align}
J_l=& \lim_{T\rightarrow\infty}\frac{1}{T} \sum_{k=1}^T\text{tr}(\mathbb{E}(P_k)), \nonumber\\
J_e = & \lim_{T\rightarrow\infty}\frac{1}{T} \sum_{k=1}^T\text{tr}(\mathbb{E}(P_k^e)), \nonumber
\end{align}
where $J_l$ corresponds to the legitimate estimator and $J_e$ corresponds to the adversary. In line with \cite{tsiamis2017state}, we study "perfect secrecy" at the legitimate estimator, here defined as keeping $J_l$ bounded at the legitimate estimator and $J_e$ unbounded at the adversary.

To study perfect secrecy, we will next elucidate the optimal design of the malicious transmission policy for the adversary as well as the reference transmission policy adopted by the sensor without an intrusion in the following section. 

\section{Remote state estimation without an intrusion}
\label{sec3}
Before exploring the malicious transmission policy, we first look at the estimation performance at the adversary without intruding on the sensor. This will provide a baseline for the design of the malicious policy. 

In remote state estimation, event-based transmission strategies are considered to be more efficient in saving energy \cite{trimpe2014event} and preserving privacy \cite{leong2018transmission}. In particular, the transmission command $\nu_k$ often depends on the sensor's belief about the remote estimator's expected error covariance $P_{k-1}$ (which is equivalent to the belief about the holding time $\eta_{k-1}$).
Denote this belief as $\eta_{k-1}^s$. If the ACK channel is reliable, i.e. $\lambda_a=1$, $\eta_k$ can be accurately inferred from $\{\gamma_0^a,\dots,\gamma_k^a\}$ according to Eq. (\ref{gamma}), i.e. $\eta_k^s=\eta_k$. The sensor can then schedule the transmissions according to 

		\textbf{Policy 1:}
\begin{equation}
\mathbb{P}(\nu_{k+1}=1 \mid \eta_{k}^s=i) = \tau_i,\label{policy}
\end{equation}
with $0\leq \tau_i\leq 1$. Policy 1 is referred as a reference policy for the sensor. Different from \cite{leong2018transmission} and \cite{trimpe2014event}, where the transmission probability $\tau_i$ is either $0$ or $1$, we consider a randomized covariance-based transmission policy as given in \cite{li2017randomized} for larger design space.

If $\lambda_a<1$, as discussed in \cite{ding2018attacks}, a belief of $\eta_k$ can be estimated with Bayesian methods. Since the number of belief states exponentially increases with the time since the last $\gamma_k^a=1$, this approach will dramatically complicate the design. An easy though suboptimal approach for the sensor is to directly update $\eta_k^s$ according to the most recent acknowledgement as
\begin{equation}
\eta_k^s = \left\{\begin{array}{cc}
0 & \gamma_k^a = 1\\
\eta_{k-1}^s+1 & \gamma_k^a = 0
\end{array}\right..\label{eta_s}
\end{equation}  
 Note that our conclusions in the latter sections are validated if other estimation approaches are applied to $\eta_k^s$. For an ease in presentation, we assume that $\eta_k^s$ evolutes according to Eq. (\ref{eta_s}) in the following.

In view of Eq. (\ref{trans1}) and (\ref{eta}), it can be verified that the sensor's holding time $\eta_k^s$ satisfies a Markovian property since the transition probability from $\eta_k^s$ to $\eta_{k+1}^s$ solely  depends on $\eta_k^s$. This enables us to formulate the evolution of $\eta_k^s$ as a Markov chain. According to this, we analyze the stationary distribution of the expected error covariance at the legitimate estimator and the adversary.

As described in Fig. 3, we define the sensor's holding time $\eta_k^s$ as the state of the Markov chain. The value of the state is taken from the set $\mathbb{S}=\{0,1,\dots,\}$. The transition probabilities can be derived from Eq. (\ref{trans1}) and Eq. (\ref{gamma}) as
\begin{equation}
\begin{aligned}
\mathbb{P}(\eta^s_{k+1}=i+1\mid \eta^s_k=i) = & 1-\tau_i\lambda\lambda_a\\
\mathbb{P}(\eta^s_{k+1}=0\mid \eta^s_k=i) = & \tau_i\lambda\lambda_a.
\end{aligned} \label{trans4}
\end{equation}
%\begin{figure*}[hbtp]
%	\centering
%	\includegraphics[width=0.75\textwidth]{fig/mc1}
%	\caption{The Markov chain: Evolution of the holding time $\eta_k^s$. } 	\label{markov_1}
%\end{figure*} 
\begin{figure*}[h]
	\centering
	\label{markov_2}
	\begin{tikzpicture}[->, auto, semithick, node distance=2.5cm]
	\tikzstyle{every state}=[fill=white,draw=black,thick,text=black,scale=1]
	\node[state,minimum size=0.8cm]    (A)                     {\footnotesize $0$};
	\node[state,minimum size=0.8cm]    (B)[right of=A]   {\footnotesize$1$};
	\node[state,minimum size=0.8cm]    (C)[right of=B]   {\footnotesize$2$};
	\node[draw=none,right=of C]           (C-D) {$\dots$};
	\node[state,minimum size=0.8cm]    (D)[right of={C-D}]   {\footnotesize$i$};
	\node[draw=none,right=of D]           (F) {$\dots$};
	\path
	(A) edge   node{\footnotesize$1-\tau_0\lambda_a\lambda$}      (B)
	(B) edge                node{\footnotesize$1-\tau_1\lambda_a\lambda$}           (C)
	(C) edge                node{\footnotesize$1-\tau_2\lambda_a\lambda$}           (C-D)
	(C-D) edge                node{\footnotesize$1-\tau_{i-1}\lambda_a\lambda$}           (D)
	(D) edge                node{\footnotesize$1-\tau_{i}\lambda_a\lambda$}           (F)
	(B)	edge[bend left,above]  node{\footnotesize$\tau_1\lambda_a\lambda$} (A)
	(C)	edge[bend left,above]  node{\footnotesize$\tau_2\lambda_a\lambda$} (A)
	(D)	edge[bend left,above]  node{\footnotesize$\tau_i\lambda_a\lambda$} (A)
	(A) edge[loop] node{\footnotesize$\tau_0\lambda_a\lambda$} (A)
	;
	\end{tikzpicture}
	\caption{Markov chain: $\mathbb{P}(\gamma_k|\nu_k=1)=\lambda$. $\mathbb{P}(\gamma^a_k|\gamma_k=1)=\lambda_a$. The state $s_k=i$ denotes the event $\eta_k^s=i$.}
\end{figure*}
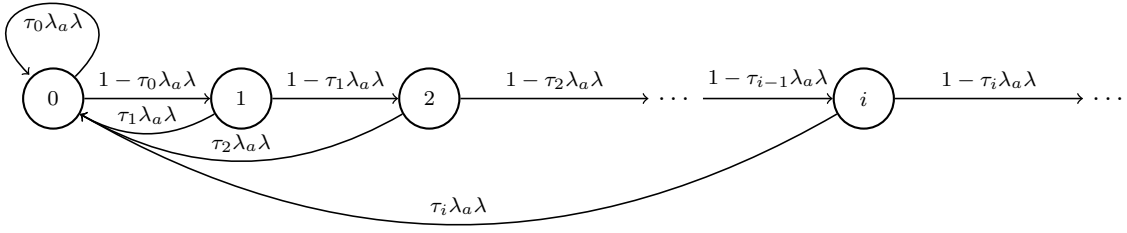
Since the transition probabilities are independent of time, the Markov chain is time-homogeneous. Denote $\pi_i^s$ as the stationary probability of the state $\eta_k^s=i$. The unique stationary distribution of $\eta_k^s$ follows \cite{li2017randomized}
\begin{equation}
\begin{aligned}
\pi_0^s = &\frac{1}{1+\sum_{l=0}^\infty \prod_{m=0}^l(1-\tau_m\lambda\lambda_a)} ,\\
\pi_i^s = &\frac{\prod_{m=0}^{i-1}(1-\tau_m\lambda\lambda_a)}{1+\sum_{l=0}^\infty \prod_{m=0}^l(1-\tau_m\lambda\lambda_a)},~i\geq 1.
\end{aligned} 	\label{pi_e}
\end{equation}
Define 
\begin{equation}
J_u=\sum_{j=0}^\infty \pi_j^s \text{tr}(f^j(\bar{P})).\label{J_u}
\end{equation}
In Proposition \ref{p1}, we show that $J_u$ is an upper bound on the AEEC at the legitimate estimator. % and thus, it can be employed as a performance index for the legitimate estimator. 

\begin{prop}
	\label{p1}
	For any given transmission policy parameterized by the transmission probability $\{\tau_i\},~i\in\mathbb{N}$, the estimator's averaged expected error covariance $J_l$ is upper bounded as
	\begin{equation}
	J_l\leq J_u. \label{error_upper}
	\end{equation}
	The equality is achieved when $\lambda_a=1$.
\end{prop}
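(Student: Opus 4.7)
The plan is to prove the inequality pathwise by comparing the two holding-time sequences $\eta_k$ and $\eta_k^s$, then pass to expectations and apply the ergodic theorem to the Markov chain defined in Eq. (\ref{trans4}).

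First, I would observe from Eq. (\ref{gamma}) that $\gamma_k^a = 1$ forces $\gamma_k = 1$, so every reset event of the sensor-side counter (i.e. $\gamma_k^a = 1$, which triggers $\eta_k^s \to 0$ in Eq. (\ref{eta_s})) is simultaneously a reset event of the true holding time (i.e. $\nu_k \gamma_k = 1$, which triggers $\eta_k \to 0$ in Eq. (\ref{eta})). Conversely, $\eta_k$ can reset when $\nu_k \gamma_k = 1$ but $\gamma_k^a = 0$, i.e.\ whenever an ACK is lost. By induction on $k$, the number of time steps since the most recent reset of $\eta_k^s$ is therefore at least as large as the number since the most recent reset of $\eta_k$, yielding the pathwise bound
\begin{equation*}
\eta_k \leq \eta_k^s \quad \text{almost surely, for all } k.
\end{equation*}
When $\lambda_a = 1$ the two reset events coincide, giving $\eta_k = \eta_k^s$ almost surely.

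Next, I would invoke the monotonicity chain (\ref{p_mono}): since $f^n(\bar{P})$ is non-decreasing in $n$ in the positive semidefinite order, the pathwise bound lifts to $P_k = f^{\eta_k}(\bar{P}) \leq f^{\eta_k^s}(\bar{P})$, hence
\begin{equation*}
\operatorname{tr}(\mathbb{E}(P_k)) \leq \mathbb{E}\bigl[\operatorname{tr}(f^{\eta_k^s}(\bar{P}))\bigr] = \sum_{j=0}^{\infty} \mathbb{P}(\eta_k^s = j)\,\operatorname{tr}(f^j(\bar{P})).
\end{equation*}
Taking the Cesàro average over $k=1,\dots,T$ and sending $T \to \infty$, the time-averaged distribution of $\eta_k^s$ converges to the stationary distribution $\{\pi_j^s\}$ given in Eq. (\ref{pi_e}) because the chain in Fig.~3 is irreducible and time-homogeneous. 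Using monotone convergence (the summands $\operatorname{tr}(f^j(\bar{P}))$ are non-negative and non-decreasing in $j$) to exchange the Cesàro limit with the infinite sum gives the upper bound
\begin{equation*}
J_l \leq \sum_{j=0}^{\infty} \pi_j^s \operatorname{tr}(f^j(\bar{P})) = J_u,
\end{equation*}
which is (\ref{error_upper}). The equality case $\lambda_a = 1$ follows because $\eta_k = \eta_k^s$ almost surely, so the first inequality is tight.

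The main obstacle I anticipate is the limit exchange, since $J_u$ may be infinite (the terms $\operatorname{tr}(f^j(\bar{P}))$ grow geometrically when $A$ is unstable). The inequality is vacuous when $J_u = \infty$, so I would separate cases: if $J_u = \infty$ there is nothing to prove, and if $J_u < \infty$ the tail bound $\sum_{j\geq N} \pi_j^s \operatorname{tr}(f^j(\bar{P})) \to 0$ as $N \to \infty$ together with convergence of $\mathbb{P}(\eta_k^s = j) \to \pi_j^s$ on each fixed $j$ justifies the exchange. A secondary subtlety is that ergodicity of the chain requires positive recurrence; this is automatic whenever $J_u < \infty$, because a finite expected return time to state $0$ is implied by the summability that defines $\pi_0^s > 0$ in (\ref{pi_e}).
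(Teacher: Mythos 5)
Your proposal is correct and follows essentially the same route as the paper's proof: establish the pathwise bound $\eta_k\leq\eta_k^s$ from Eqs. (\ref{eta}), (\ref{gamma}), (\ref{eta_s}), lift it via the monotonicity of $f^n(\bar{P})$, and pass to the stationary distribution, with equality when $\lambda_a=1$ forces $\eta_k=\eta_k^s$. Your additional care with the Cesàro-limit/infinite-sum exchange and the $J_u=\infty$ case is a welcome refinement of a step the paper treats implicitly.
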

\begin{proof}
	In view of Eq. (\ref{eta}), (\ref{gamma}) and (\ref{eta_s}), we have 
	\begin{equation}
	\eta_k\leq \eta_k^s.\nonumber
	\end{equation} 
	Since $P_k=f^{\eta_k}(\bar{P})$ and the function $f^n(\bar{P})$ is non-decreasing with $n$, it can be derived that
	\begin{align}
	J_l=&\lim_{T\rightarrow\infty}\frac{1}{T} \sum_{k=1}^T\text{tr}(\mathbb{E}(P_k)) =  \lim_{T\rightarrow\infty}\frac{1}{T} \sum_{k=1}^T\text{tr}(\mathbb{E}(f^{\eta_k}(\bar{P}))) \nonumber\\
	\leq & \lim_{T\rightarrow\infty}\frac{1}{T} \sum_{k=1}^T\text{tr}(\mathbb{E}(f^{\eta^s_k}(\bar{P}))) = \sum_{j=0}^\infty \pi_j^s \text{tr}(f^j(\bar{P}))=J_u.\nonumber
	\end{align}
	If $\lambda_a=1$, we have $\eta_k=\eta_k^s$ and the equality is achieved.
\end{proof}	
Since the sensor can hardly infer the exact value of $\eta_k$ when $\lambda_a<1$, it can take $J_u$, which depends on $\eta_k^s$, as an index of the legitimate estimator's performance. Sufficient and necessary conditions for a bounded $J_u$ are given in Proposition \ref{prop_ns} based on Lemma \ref{prop_ul}.
\begin{lem}
	\label{prop_ul}
	For any positive integer $n$, the function $\text{tr}f^n(\bar{P})$ is upper and lower bounded by a function of $\bar{P}$ and $n$ as
	\begin{equation}
	B_l(\bar{P},n)	\leq	\text{tr}f^n(\bar{P}) \leq B_u(\bar{P},n) \label{f_bound}
	\end{equation}
	with 
	\begin{equation}
	\begin{aligned}
	& B_u(\bar{P},n) =  n_s\sigma_{\max}(\bar{P})\sigma_{\max}^{2}(A^n) +n_s\sigma_{\max}(Q)\sum_{i=0}^{n-1}\sigma_{\max}^{2}(A^i),
	\\
	& B_l(\bar{P},n) =\lambda_{\min}(\bar{P})\lambda_{\max}\left((A^\top)^nA^n\right) \nonumber\\ & \quad\quad\quad\quad\quad+\lambda_{\min}(Q)\sum_{i=0}^{n-1}\lambda_{\max}\left((A^\top)^iA^i\right).
	\end{aligned} \nonumber
	\end{equation}
	Here $n_s$ is the dimension of the system (see Eq. (\ref{sys})). 
\end{lem}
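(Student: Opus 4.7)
The plan is to first write $f^n(\bar{P})$ in closed form by unfolding the recursion $f(X)=AXA^\top+Q$, then attack the trace with standard inequalities for products of positive semidefinite matrices. A straightforward induction on $n$ gives
\[
f^n(\bar P) \;=\; A^n \bar P (A^\top)^n \;+\; \sum_{i=0}^{n-1} A^i Q (A^\top)^i,
\]
so that by the cyclic property of the trace,
\[
\mathrm{tr}\,f^n(\bar P) \;=\; \mathrm{tr}\!\bigl(\bar P (A^\top)^n A^n\bigr) \;+\; \sum_{i=0}^{n-1}\mathrm{tr}\!\bigl(Q (A^\top)^i A^i\bigr).
\]

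For the upper bound, I would apply the standard inequality $\mathrm{tr}(XY)\le \sigma_{\max}(X)\,\mathrm{tr}(Y)$, valid whenever $X,Y\succeq 0$ (write $\mathrm{tr}(XY)=\mathrm{tr}(Y^{1/2}XY^{1/2})\le\sigma_{\max}(X)\,\mathrm{tr}(Y)$). Then $\mathrm{tr}((A^\top)^k A^k)=\|A^k\|_F^2\le n_s\,\sigma_{\max}^2(A^k)$ since the squared Frobenius norm is the sum of at most $n_s$ squared singular values. Summing the bounds over $i=0,\dots,n-1$ and adding the $\bar{P}$-term produces exactly the claimed $B_u(\bar P,n)$.

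For the lower bound, I would use the companion inequality $\mathrm{tr}(XY)\ge \lambda_{\min}(X)\,\mathrm{tr}(Y)$ for PSD $X,Y$ (same factorization argument), together with the fact that any PSD matrix $M$ satisfies $\mathrm{tr}(M)\ge \lambda_{\max}(M)$. Applied with $M=(A^\top)^kA^k$, this yields $\mathrm{tr}((A^\top)^kA^k)\ge \lambda_{\max}((A^\top)^kA^k)$, and combining with the preceding inequality on each summand reproduces $B_l(\bar P,n)$.

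There is no real obstacle here; the only subtle point is recognizing the two trace inequalities and remembering that the argument $(A^\top)^kA^k$ itself is PSD so its trace dominates its largest eigenvalue (whereas na\"ively one might hope for a sum of eigenvalues on the lower side, which is not what the statement claims). One should also note in passing that the $Q$ is assumed PSD by construction and $\bar P\succ 0$ because $(A,\sqrt{Q})$ is controllable, so $\lambda_{\min}(\bar P)>0$ and the lower bound is nontrivial whenever $A$ has at least one unstable mode.
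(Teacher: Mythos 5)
Your proof is correct and follows essentially the same route as the paper: unfold the Lyapunov-type recursion to get $f^n(\bar P)=A^n\bar P(A^\top)^n+\sum_{i=0}^{n-1}A^iQ(A^\top)^i$, apply trace cyclicity, and bound each term with the standard PSD trace inequalities (the paper orders the singular-value/eigenvalue steps slightly differently, but the content is identical). No gaps.
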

\begin{proof}
	See Appendix \ref{prlemm2}.
\end{proof}	
\begin{prop}
	\label{prop_ns}
The legitimate estimator's performance index $J_u$ is bounded if there exists an integer $N_r>0$ such that when $i\geq N_r$, the transmission probability $\tau_i$ satisfies that \begin{equation}
\tau_i>\frac{1}{\lambda\lambda_a}\left(1-\frac{1}{|\lambda_{\max}(A)|^2}\right).\label{sufficient}
\end{equation}
 Meanwhile, if $J_u$ is bounded and $|\lambda_{max}(A)|\geq 1$, then $\tau_i$ must satisfy that 
 	\begin{align}
 \lim_{l\rightarrow\infty} \prod_{i=0}^{l-1}(1-\tau_i\lambda)= 0. \label{lim}
 \end{align}
\end{prop}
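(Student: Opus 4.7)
The plan is to split the statement into its two directions and handle each using the matching side of the bound in Lemma~\ref{prop_ul} together with the closed-form stationary law~(\ref{pi_e}).

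For the sufficient direction, I would bound $J_u = \sum_{j\geq 0}\pi_j^s\,\text{tr}(f^j(\bar P))$ from above using $B_u(\bar P,j)$ and show that the general term is eventually dominated by a geometric series. The hypothesis $\tau_i>\tfrac{1}{\lambda\lambda_a}\bigl(1-|\lambda_{\max}(A)|^{-2}\bigr)$ for $i\geq N_r$ rewrites as $1-\tau_i\lambda\lambda_a<|\lambda_{\max}(A)|^{-2}$. Reading this as giving a uniform bound $\beta<|\lambda_{\max}(A)|^{-2}$ (see below), one obtains $\prod_{m=0}^{j-1}(1-\tau_m\lambda\lambda_a)\leq K\beta^{\,j-N_r}$ for $j>N_r$, which in particular makes the normalization constant in (\ref{pi_e}) finite. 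Next, by Gelfand's spectral radius formula $\lim_n \sigma_{\max}(A^n)^{1/n}=|\lambda_{\max}(A)|$, so I can pick $\delta>0$ small enough that $\beta(|\lambda_{\max}(A)|+\delta)^2<1$ and a constant $C$ with $\sigma_{\max}(A^n)\leq C(|\lambda_{\max}(A)|+\delta)^n$ for all $n$; this bounds both the $\sigma_{\max}^2(A^j)$ term and the partial sum $\sum_{i=0}^{j-1}\sigma_{\max}^2(A^i)$ in $B_u$ by the same exponential growth. Substituting, $\pi_j^s B_u(\bar P,j)\leq C'\bigl(\beta(|\lambda_{\max}(A)|+\delta)^2\bigr)^j$, so the series in (\ref{J_u}) converges and $J_u$ is bounded.

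For the necessary direction, I would pivot to the lower bound $B_l$. Since $|\lambda_{\max}(A)|\geq 1$ gives $\sigma_{\max}^2(A^i)=\lambda_{\max}((A^\top)^i A^i)\geq |\lambda_{\max}(A)|^{2i}\geq 1$ for every $i\geq 0$, Lemma~\ref{prop_ul} yields $\text{tr}(f^j(\bar P))\geq B_l(\bar P,j)\geq j\,\lambda_{\min}(Q)$. Hence $J_u\geq \lambda_{\min}(Q)\sum_{j\geq 0} j\,\pi_j^s$, and boundedness of $J_u$ forces $\sum_j j\pi_j^s<\infty$, which in turn forces $\pi_j^s\to 0$. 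Because the denominator in (\ref{pi_e}) is a positive finite constant whenever $J_u$ is meaningful as a stationary average, this is equivalent to $\prod_{m=0}^{j-1}(1-\tau_m\lambda\lambda_a)\to 0$. To convert to the claimed form~(\ref{lim}), I use monotonicity in $\lambda_a\in[0,1]$: since $\tau_m\lambda\lambda_a\leq \tau_m\lambda$, one has $1-\tau_m\lambda\leq 1-\tau_m\lambda\lambda_a$, so $0\leq \prod_{m=0}^{l-1}(1-\tau_m\lambda)\leq \prod_{m=0}^{l-1}(1-\tau_m\lambda\lambda_a)\to 0$.

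The main obstacle I expect is in the sufficient direction, specifically bridging the pointwise strict inequality on $\tau_i$ and the uniform gap $\beta<|\lambda_{\max}(A)|^{-2}$ needed to beat the geometric growth of $\sigma_{\max}^2(A^j)$. If the stated hypothesis is read literally as pointwise, one can construct sequences $\tau_i\downarrow \tfrac{1}{\lambda\lambda_a}(1-|\lambda_{\max}(A)|^{-2})$ for which $\prod(1-\tau_m\lambda\lambda_a)$ decays only sub-geometrically, so the natural reading should be a $\liminf$-type margin, or one should quietly pass to $\beta=\sup_{i\geq N_r}(1-\tau_i\lambda\lambda_a)$; once this margin is granted, the Gelfand estimate supplies the necessary matching exponent and the rest is a routine geometric series bound. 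In the necessary direction the only delicate step is the implicit well-definedness of the stationary distribution, which follows because $\sum_l \prod_{m=0}^l(1-\tau_m\lambda\lambda_a)<\infty$ is forced by $J_u<\infty$ via the same $B_l$-based inequality used above.
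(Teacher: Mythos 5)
Your argument is essentially sound and, for the necessity half, follows the same route as the paper: the paper writes $J_u$ out via (\ref{pi_e}) (see Eq.~(\ref{finite})), invokes the at-least-linear growth of $\text{tr}(f^l(\bar P))$ from Lemma~\ref{prop_ul}, and concludes that the weighted sum can only stay bounded if the products vanish. Your version is actually tighter: you make the lower bound $\text{tr}(f^j(\bar P))\geq j\,\lambda_{\min}(Q)$ explicit, and you explicitly bridge the discrepancy between the products $\prod_m(1-\tau_m\lambda\lambda_a)$ appearing in (\ref{pi_e}) and the products $\prod_m(1-\tau_m\lambda)$ appearing in the claim (\ref{lim}) via the monotonicity $1-\tau_m\lambda\leq 1-\tau_m\lambda\lambda_a$ --- a step the paper silently elides (its Eq.~(\ref{finite}) simply substitutes $\lambda$ for $\lambda\lambda_a$ without comment). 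For the sufficiency half you diverge from the paper, which disposes of it in one line by citing Theorem~2 of Schenato's work; your direct argument (geometric decay of the stationary tail beating the Gelfand-rate growth of $\sigma_{\max}^2(A^j)$) is self-contained and buys transparency at the cost of length. Your flagged obstacle is a genuine one and worth recording: as stated, the pointwise strict inequality (\ref{sufficient}) does not yield a uniform margin $\sup_{i\geq N_r}(1-\tau_i\lambda\lambda_a)<|\lambda_{\max}(A)|^{-2}$, and one can choose $\tau_i$ decreasing to the critical value fast enough (e.g.\ with $\sum_i\epsilon_i<\infty$) so that $\pi_j^s\,\text{tr}(f^j(\bar P))$ does not decay and $J_u$ diverges. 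The cited theorem, and the policies actually used later in the paper (eventually constant $\tau_i$, e.g.\ $\tau_i=1$ for $i\geq N_r$), implicitly supply that uniform gap, so your repair --- reading the hypothesis with a uniform margin $\beta<|\lambda_{\max}(A)|^{-2}$ --- is the right one and matches the intended meaning.
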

The proof of Proposition \ref{prop_ns} is given in Appendix \ref{ptoP2}. Next, we will investigate, without specifying the transmission policy, the estimation performance at the adversary on the premise that $J_u$ is bounded. In view of Lemma \ref{prop_ul}, the adversary's AEEC $J_e$ is always bounded if the system is strictly stable, i.e. $|\lambda_{\max}(A)|<1$. Hence, in Proposition \ref{prop0}, we exclude this trivial case and focus on marginally stable and unstable systems. 
\begin{prop} \label{prop0}
	If a transmission policy parameterized by $\{\tau_i\},~i\in\mathbb{N}$ can make $J_u$ bounded, then the adversary's AEEC $J_e$ has the following properties:
	\begin{itemize}
		\item[-i)]
		If the system is marginally stable, i.e. $\lambda_{\max}(A)=1$,  $J_e$ is bounded, but can be made arbitrarily large. Namely, given any scalar $\underline{b}>0$, there exists a transmission policy $\{\tau_i\}$, $i\in\mathbb{N}$, such that
		\begin{align}
		J_e >\underline{b}.\nonumber
		\end{align}
		\item[-ii)] If $0<\lambda_e<1$ and the system is unstable, i.e. $|\lambda_{\max}(A)|>1$, then
		there exists a transmission policy ensuring perfect secrecy, i.e. $J_e=\infty$.
	\end{itemize}
\end{prop}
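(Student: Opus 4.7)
The plan is to handle the two cases separately using Lemma \ref{prop_ul} together with stationary-distribution analyses analogous to those used in Proposition \ref{prop_ns}.

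For part (i), where $|\lambda_{\max}(A)|=1$, the upper bound $B_u(\bar P, n)$ from Lemma \ref{prop_ul} grows at most polynomially in $n$ (bounded if $A$ is diagonalizable on its unit-circle eigenvalues, $O(n^{2d-1})$ in general with $d$ the largest Jordan block size). To show $J_e<\infty$ for any policy with $J_u<\infty$, I would combine $J_u<\infty$ with the polynomial lower bound $B_l$ in Lemma \ref{prop_ul} to conclude $\mathbb{E}_{\pi^s}[(\eta^s)^p]<\infty$ for the relevant polynomial degree $p$, and in particular that the sensor's stationary transmission rate $\bar\tau:=\mathbb{E}_{\pi^s}[\tau_{\eta^s}]$ is strictly positive (otherwise $\eta^s$ would diverge). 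The adversary's reception events are a Bernoulli thinning (by $\lambda_e$) of sensor transmissions, so the adversary's inter-reception time is a geometric sum of sensor inter-transmission times; a Wald-type identity then propagates the finite $p$-th moment from $\eta^s$ to $\eta^e$, giving $J_e<\infty$. For the second assertion I would use the threshold family $\tau_i^{(N)}=0$ for $i<N$, $\tau_i^{(N)}=1$ for $i\geq N$: since the tail of $\pi^{s}$ decays geometrically beyond $N$, dominating the polynomial growth of $\text{tr}(f^i(\bar P))$, we obtain $J_u<\infty$ for every $N$. Meanwhile, the sensor is silent for $N$ consecutive steps after every reset, so $\eta^e\geq N$ throughout each silent window; combined with the linear lower bound $B_l(\bar P,n)\geq \lambda_{\min}(Q)\,n$ from Lemma \ref{prop_ul}, this yields $J_e\geq c'N\to\infty$ as $N\to\infty$.

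For part (ii), with $|\lambda_{\max}(A)|>1$ and $0<\lambda_e<1$, I would use the same threshold policy $\tau_i=0$ for $i<N$, $\tau_i=1$ for $i\geq N$ with $N$ chosen sufficiently large. The premise $J_u<\infty$ together with Proposition \ref{prop_ns} forces $\lambda\lambda_a>1-1/|\lambda_{\max}(A)|^2$, so the sufficient condition of Proposition \ref{prop_ns} is satisfied and $J_u<\infty$ holds for every $N$. Each sensor cycle consists of $N$ silent steps followed by a burst of geometric length $G$ with parameter $\lambda\lambda_a$. Conditioning on $G$, the probability that the adversary misses the entire burst equals $q:=\mathbb{E}[(1-\lambda_e)^G]\in(0,1)$, independent of $N$. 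Therefore the number of cycles the adversary misses between successive receptions is geometric with parameter $1-q$, and in the stationary regime $\mathbb{P}(\eta^e\geq mN)\geq c\,q^m$. Using the exponential lower bound $B_l(\bar P,n)\geq \lambda_{\min}(\bar P)|\lambda_{\max}(A)|^{2n}$ and an Abel-summation estimate,
\[
J_e\geq c''\sum_{m} q^m |\lambda_{\max}(A)|^{2mN},
\]
which diverges whenever $|\lambda_{\max}(A)|^{2N}>1/q$, i.e., $N>\log(1/q)/(2\log|\lambda_{\max}(A)|)$. Picking any such $N$ yields $J_e=\infty$, establishing perfect secrecy.

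The main obstacle I anticipate is making the Wald-type thinning argument in part (i) fully rigorous: the sensor's inter-transmission times are not i.i.d.\ but are Markov-modulated through the shared state $\eta^s$, so propagating finite $p$-th moments from $\eta^s$ to $\eta^e$ calls for a careful coupling or Foster-Lyapunov argument. A parallel subtlety in part (ii), especially when $\lambda_a<1$, is justifying the tail lower bound $\mathbb{P}(\eta^e\geq mN)\geq c\,q^m$ rigorously, since the sensor's cycle then becomes a Markov renewal object rather than a clean silent-plus-geometric-burst structure.
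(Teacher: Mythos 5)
Your handling of the second claim of part (i) and of part (ii) essentially coincides with the paper's proof: the paper also uses the threshold family ($\tau_i=0$ for $i\le\bar t$, $\tau_i=1$ for $i>\bar t$), lower-bounds $J_e$ by restricting to the event that the adversary misses entire bursts, and obtains linear growth in $\bar t$ in the marginally stable case and divergence once $\lambda\lambda_a(1-\lambda_e)|\lambda_{\max}(A)|^{2(\bar t+1)}>1$ in the unstable case (the paper delegates the latter to Theorem III.4 of \cite{leong2018transmission}; your self-contained renewal bound with $q=\mathbb{E}[(1-\lambda_e)^G]\ge\lambda\lambda_a(1-\lambda_e)$ is, if anything, slightly sharper). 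Your worry about $\lambda_a<1$ breaking the cycle structure is unfounded: $\eta_k^s$ resets exactly when $\gamma_k\gamma_k^a=1$, so each cycle is still an i.i.d.\ block of $\bar t+1$ silent steps followed by a Geometric($\lambda\lambda_a$) burst, and the adversary's per-transmission Bernoulli($\lambda_e$) thinning is independent of it, so $\mathbb{P}(\eta^e\ge m\bar t)\gtrsim q^m$ goes through cleanly. One small slip: during a silent window $\eta_k^e$ need not stay above $N$ (it can restart from $0$ if the adversary caught the preceding burst); you only get an average of order $N/2$ over the window, which still yields $J_e\gtrsim c'N$.

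The genuine divergence is the first claim of part (i), boundedness of $J_e$ for \emph{every} policy with $J_u<\infty$. The Markov-modulated Wald argument you sketch is exactly the obstacle you identify, and the paper does not attempt anything like it. Instead it argues much more crudely: the necessary condition in Eq.~(\ref{lim}) is used to extract a window length $N$ and an $\epsilon_t>0$ such that in every window of $N$ consecutive steps the sensor transmits at least once with probability at least $\epsilon_t$; hence the probability that the adversary has received nothing up to time $k$ is at most $(1-\lambda_e\epsilon_t)^{\lfloor k/N\rfloor}$, a geometric tail that dominates the at-most-linear growth of $\mathrm{tr}\,f^n(\bar P)$ from Lemma~\ref{prop_ul}, and the tail-sum formula then bounds $\mathrm{tr}\,\mathbb{E}(P_k^e)$ uniformly. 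This completely bypasses propagating $p$-th moments from $\eta^s$ to $\eta^e$. If you want to retain your renewal route you would indeed need a regeneration or Foster--Lyapunov argument to handle the non-i.i.d.\ inter-transmission times; the paper's window argument is the far cheaper fix, and I would recommend adopting it to close this gap.
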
	

\begin{proof}
	See Appendix \ref{ptprop3}.
\end{proof}
Proposition \ref{prop0} suggests that by properly designing the transmission policy, the sensor is capable of driving the AEEC at the adversary to infinity if the dynamics in Eq. (\ref{sys}) is unstable, and arbitrarily large if the system is marginally stable. This motivates the adversary to secretly re-schedule the transmissions  to obtain a more accurate estimate.

\section{Stealthy malicious transmission policy}
\label{sec4}
In this section, we will adopt the adversary's perspective and study how to optimize the malicious transmission policy such that it is more advantageous to the adversary while staying undetected. 
We will first model the transmission process as a Markov decision process, and then devise a stealthy constraint in terms of the state-action pairs associated with the MDP. 
After that, the optimal malicious policy will be derived by solving a constrained Markov decision process. 

\subsection{Markov decision process}
\label{formulation}
As shown in Fig. \ref{setup}, the sensor is controlled by the adversary after the hacking, and both $\eta_k^e$ and $\eta_k^s$ are known to the sensor. This enables the adversary to modify the transmission policy in (\ref{policy}) by taking into account the adversary's covariance $P_k^e$ (equivalent to $\eta_k^e$) in the general form of 
\begin{align}
\mathbb{P}(\nu_{k+1}=1 \mid \eta_{k}^s=i, \eta_{k}^e=j) = \tilde{\tau}_{ij},~\forall i,j\in \mathbb{N}.\label{policy2}
\end{align}
Note from Eq. (\ref{eta_e}) and (\ref{eta_s}) that both $\eta_k^s$ and $\eta_k^e$ are Markovian. We can formulate a Markov decision process to derive the optimal malicious transmission policy for the adversary. Define the pair $(\eta_k^s,\eta_k^e)$ as the state $S_k$ and the action as the transmission command $\nu_k\in\mathbb{A}$ with $\mathbb{A}=\{0,1\}$. Based on Eq. (\ref{trans1}), (\ref{eta}), (\ref{trane}), the state transition probabilities for $S_k=(i,j)$ are derived as follows
\begin{equation}
\begin{aligned}
& \mathbb{P}(S_{k+1}\mid S_k,\nu_{k+1})=\\
&\left\{
\begin{array}{ll}
1 & S_{k+1}=(i+1,j+1),~\nu_{k+1}=0 \\
(1-\lambda\lambda_a)(1-\lambda_e) & S_{k+1}=(i+1,j+1),~\nu_{k+1}=1\\
\lambda\lambda_a(1-\lambda_e) & S_{k+1}=(0,j+1),~\nu_{k+1}=1\\
(1-\lambda\lambda_a)\lambda_e& S_{k+1}=(i+1,0),~\nu_{k+1}=1\\
\lambda\lambda_a\lambda_e & S_{k+1}=(0,0),~\nu_{k+1}=1
\end{array}
\right.
\end{aligned}.
\label{trans6}
\end{equation}	 
In Problem \ref{prob1}, the action $\nu_k$ is determined by minimizing the AEEC at the adversary while satisfying a stealthy constraint. 
\begin{prob}
	\label{prob1}
\begin{align}
\min_{\nu_k} ~~&~~J_e=\lim_{T\rightarrow\infty}\frac{1}{T} \sum_{k=1}^T\text{tr}(\mathbb{E}(P_k^e))\nonumber\\
\text{s.t.} ~~~&~~\text{stealthy constraint}.\nonumber
\end{align}
\end{prob}
Next, we will detail the design of the stealthy constraint.

\subsection{Stealthy constraint}
From Eq. (\ref{policy2}), the transmission probability at $\eta_k^s$ follows
\begin{equation}
\mathbb{P}(\nu_{k+1}=1 \mid \eta_{k}^s=i) = \sum_{j=0}^\infty \tilde{\tau}_{ij}\mathbb{P}(\eta_{k}^e=j) ,\nonumber
\end{equation}
which might be significantly differ from the reference policy in Eq. (\ref{policy}). 
Being aware of the potential existence of the adversary, the legitimate estimator will proactively detect whether the transmission policy in (\ref{policy}) is modified or not based on its information set $\mathcal{I}_k$. This is essentially a parameter estimation problem for a hidden Markov model (HMM) \cite{molloy2018minimax} if we take $\eta_k^s$ as the hidden states and $\nu_k\gamma_k$ as the observations. The state transition probability is given in (\ref{trans4}) and the emission probabilities can be derived from Eqs. (\ref{trans1}), (\ref{gamma}), (\ref{policy}), and (\ref{eta_s}), which follow
\begin{equation}
\begin{aligned}
&\mathbb{P}(\nu_{k}\gamma_{k}=1\mid \eta_k^s=i+1) 
= \frac{\mathbb{P}(\nu_{k}\gamma_{k}=1\mid \eta_{k-1}^s=i)}{\mathbb{P}(\eta_k^s=i+1\mid \eta_{k-1}^s=i)}\\
&\quad\quad\quad\quad\quad\quad\quad\quad\quad\quad~=\frac{\tau_i\lambda}{1-\tau_i\lambda\lambda_a}\\
&\mathbb{P}(\nu_{k}\gamma_{k}=0\mid \eta_k^s=i+1) =1-\frac{\tau_i\lambda}{1-\tau_i\lambda\lambda_a}\\
&\mathbb{P}(\nu_{k}\gamma_{k}=1\mid \eta_k^s=0) =1\\
&\mathbb{P}(\nu_{k}\gamma_{k}=0\mid \eta_k^s=0) =0
\end{aligned}. \label{emission}
\end{equation} 
Both the transition probability and emission probability are determined by the transmission policy $\{\tau_i\}$, and thus the legitimate estimator can detect variations in the parameters of this HMM by observing the indicator variables $\nu_0\gamma_0,\dots,\nu_k\gamma_k$. This HMM has infinite hidden states since $\eta_k^s$ takes value from the set $\mathbb{N}$. It can be equivalently represented by a finite-state HMM by aggregating the state $\eta_k^s\geq N_r$ for any positive integer $N_r$. From Theorem 6 in \cite{allman2009identifiability}, the transmission probability ${\tau_i}$ with $i\in[0,N_r]$ is generically identifiable from the the marginal distribution of $N_r$ consecutive variables \begin{equation}o_{k+1:k+N_r}=(\nu_{k+1}\gamma_{k+1},\dots,\nu_{k+N_r}\gamma_{k+N_r}). \nonumber
\end{equation}
 This property motivates the legitimate estimator to formulate a Hypothesis test in the form of

%{\color{red}{Revision issues: Since $\mathbb{P}\Big((\nu_{k+1}\gamma_{k+1},\dots,\nu_{k+2N_r-1}\gamma_{k+2N_r-1})=(o_1,\dots,o_{2N_r-1})\Big)=\sum_{i=0}^{\infty}\mathbb{P}(\eta_k=i)\mathbb{P}(\nu_{k+1}\gamma_{k+1}=o_1~|~\eta_k=i)\mathbb{P}(\nu_{k+2}\gamma_{k+2}=o_2~|~\eta_k=i,\nu_{k+1}\gamma_{k+1}=o_1)\dots\mathbb{P}(\nu_{k+2N_r-1}\gamma_{k+2N_r-1}=o_{2N_r-1}~|~\eta_k=i,\nu_{k+1}\gamma_{k+1}=o_1,\dots,\nu_{k+2N_r-2}\gamma_{k+2N_r-2})$}, the distribution of the observations is not simply determined by $(\eta_k,\nu_{k+1}\gamma_{k+1})$. Perhaps, in the following hypothesis test, we should directly use the distribution of the observations, rather than the pair $(\eta_k,\nu_{k+1}\gamma_{k+1})$.

%Another concern is that to apply Lemma 5, the Markov chain should be of finite states. In this sense, we might need to truncate the Markov chain such that the state $\eta_k^s$ is finite. How to organize the paper is a problem. If we truncate the states $\eta_k^s$ in Section IV.A, it might seem to be verbose since in Section IV.D, we do a truncation again to $\eta_k^e$. }

\noindent\textbf{Hypothesis test:}
\begin{itemize}
	\item[$\mathbf{H}_0$:] The marginal distribution of  $o_{k+1:k+N_r}$ follows the distribution $\mathcal{P}_0$;
	\item[$\mathbf{H}_1$:] The marginal distribution of  $o_{k+1:k+N_r}$ does not follow the distribution $\mathcal{P}_0$;
\end{itemize}
Denote the distribution of $o_{k+1:k+N_r}$ after the intrusion as $\mathcal{P}_1$. To stay undetected, the adversary intends to have the observations $o_{k+1:k+N_r}$ follow $\mathcal{P}_1$ while keeping Hypothesis $H_0$ not rejected by the legitimate estimator. Consider that the Kullback-Leibler distance $	\mathcal{KL}(\mathcal{P}_1||\mathcal{P}_0)$, defined as 	\begin{equation}
\mathcal{KL}(\mathcal{P}_1||\mathcal{P}_0)=\sum_{x\in\Omega_s}\mathcal{P}_1(x)\log\frac{\mathcal{P}_1(x)}{\mathcal{P}_0(x)},\nonumber
\end{equation}
gives the expected log-likelihood with which Hypothesis $H_0$ can be rejected per event \cite{press1992numerical}. We formulate a stealthy constraint as  
	\begin{equation}
\mathcal{KL}(\mathcal{P}_1||\mathcal{P}_0)\leq\epsilon_{kl},\label{stealthy_kl}
\end{equation}
with $\epsilon_{kl}>0$ denoting a stealthy tolerance. A small $\epsilon_{kl}$ will make the intrusion less likely being detected. 

%To stay undetected, the adversary intends to keep both the false alarm probability $\mathbb{P}^{FA}$ (the probability of deciding $H_1$ when $H_0$ is true) and the missed detection probability $\mathbb{P}^{M}$ (the probability of deciding $H_0$ when $H_1$ is true) large. 

%If we define the stationary distribution of the state-observation pair $(\eta_k=i,\nu_{k+1}\gamma_{k+1}=a)$ before the intrusion as $\pi_o(i,a)$, with $i\in\mathbb{N}_r$ and $a\in\mathbb{A}$, and that after the intrusion as  $\pi_o'(i,a)$, Eq. (\ref{stealthy2}) and Eq. (\ref{stealthy_kl}) can be cast into Eq. (\ref{tvd}) and Eq. (\ref{kl}).
%\begin{align}
%& %\sum_{i\in\mathbb{N}_r,a\in\mathbb{A}}|\pi_o(i,a)-\pi_o'(i,a)|\leq \epsilon_c, \label{tvd}\\
%& \sum_{i\in\mathbb{N}_r,a\in\mathbb{A}} \pi_o'(i,a)\log\frac{\pi_o'(i,a)}{\pi_o(i,a)}\leq \epsilon_{kl}.\label{kl}
%\end{align}

 Next, we will study the effects of the detection horizon $N_r$, the stealthy tolerance $\epsilon_{kl}$, the successful reception probability $\lambda_a$ and the reference policy (\ref{policy}) onto perfect secrecy. In Theorem \ref{safe}, we show that if the ACK channel is reliable, i.e. $\lambda_a=1$, and $N_r$ is large enough, then perfect secrecy can be achieved for unstable systems.
\begin{thm} \textbf{(Attainability of ``perfect secrecy")}:
Given that $\lambda_a=1$, $0<\lambda_e<1$ and $|\lambda_{\max}(A)|>1$, devise the reference transmission policy as
	\begin{align}
\mathbb{P}[\nu_{k+1}=1\mid \eta_k^s]=\left\{\begin{array}{cc}
1 & \eta_k^s > \bar{t}\\
0 & \text{else}
\end{array}\right. \label{candidate}
\end{align}
with $\bar{t}$ satisfying 
 \begin{align}
\frac{1}{| \lambda_{\max}(A)|^{2(\bar{t}+1)}}< \lambda(1-\lambda_e).\nonumber
\end{align}
If $N_r>\bar{t}$, there is no malicious transmission policy in the form of Eq. (\ref{policy2}) that can simultaneously make 
 $J_e$ bounded and fulfill the stealthy constraint in Eq. (\ref{stealthy_kl}). \label{safe}
\end{thm}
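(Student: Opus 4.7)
The plan is to argue by contradiction: assume there is a policy of the form (\ref{policy2}) yielding finite $J_e$ and satisfying (\ref{stealthy_kl}), and show this forces $J_e=\infty$.

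First I characterize the support of $\mathcal{P}_0$. With $\lambda_a=1$ we have $\eta_k^s=\eta_k$ deterministically, and the threshold reference policy (\ref{candidate}) is a pure deterministic rule: after every event $\nu_k\gamma_k=1$ the sensor is forced to stay silent for the next $\bar{t}+1$ time steps before it can attempt a transmission again. Consequently, any realization of the window $o_{k+1:k+N_r}$ that contains two $1$'s at distance $\leq\bar{t}+1$ is a zero-probability event under $\mathcal{P}_0$. Absolute continuity of $\mathcal{P}_1$ with respect to $\mathcal{P}_0$, which is necessary for $\mathcal{KL}(\mathcal{P}_1\|\mathcal{P}_0)\leq\epsilon_{kl}$ to be finite, then forces $\mathcal{P}_1$ to assign probability zero to the same bad events.

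Next I translate this support condition into a structural constraint on the adversary's attempts. Since $\gamma_k\sim\mathrm{Bern}(\lambda)$ is independent of $\nu_k$ and $\lambda>0$, if $\mathbb{P}_1(\nu_k=1,\nu_{k+j}=1)>0$ for some $1\leq j\leq\bar{t}$ (visible inside the $N_r$-window because $N_r>\bar{t}$), then
\begin{equation*}
\mathbb{P}_1(\nu_k\gamma_k=1,\nu_{k+j}\gamma_{k+j}=1)\geq\lambda^{2}\mathbb{P}_1(\nu_k=1,\nu_{k+j}=1)>0,
\end{equation*}
contradicting the support condition. Hence any two adversary attempts are separated by at least $\bar{t}+1$ time steps almost surely.

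I then lower-bound the inter-success time $T_{\mathrm{adv}}$ at the adversary (the time between consecutive events $\nu_k\gamma_k^e=1$). Because $\gamma_k^e\sim\mathrm{Bern}(\lambda_e)$ is independent of $\nu_k$, the number $M$ of attempts until the next adversary success is $\mathrm{Geom}(\lambda_e)$; combined with the spacing bound this gives $T_{\mathrm{adv}}\geq M(\bar{t}+1)$ stochastically, so $\mathbb{P}(T_{\mathrm{adv}}>n)\geq(1-\lambda_e)^{\lfloor n/(\bar{t}+1)\rfloor}$. Using the stationary age-distribution identity $\pi_n^e=\mathbb{P}(T_{\mathrm{adv}}>n)/\mathbb{E}[T_{\mathrm{adv}}]$ together with the lower bound $\mathrm{tr}\,f^n(\bar{P})\geq c\,|\lambda_{\max}(A)|^{2n}$ supplied by Lemma \ref{prop_ul}, I obtain
\begin{equation*}
J_e\geq\kappa\sum_{k\geq 0}\bigl[(1-\lambda_e)|\lambda_{\max}(A)|^{2(\bar{t}+1)}\bigr]^{k}
\end{equation*}
for some $\kappa>0$. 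The hypothesis $|\lambda_{\max}(A)|^{2(\bar{t}+1)}>1/[\lambda(1-\lambda_e)]\geq 1/(1-\lambda_e)$ makes the geometric ratio exceed one, so the series diverges, contradicting $J_e<\infty$.

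The main obstacle is making the renewal-type identity rigorous when the adversary uses a state-dependent randomized policy: the inter-success intervals need not be i.i.d.\ and the joint chain $(\eta_k^s,\eta_k^e)$ is not a classical renewal process. I would handle this via a stochastic-dominance coupling, replacing each inter-success interval by its lower bound $(\bar{t}+1)M$ with $M\sim\mathrm{Geom}(\lambda_e)$, and working with the i.i.d.\ renewal process induced by the coupling. The case in which the joint chain fails to be positive recurrent (so $\mathbb{E}[T_{\mathrm{adv}}]=\infty$) is easier: the Cesaro average is driven to infinity by the growth of $\mathrm{tr}\,f^n(\bar{P})$ alone.
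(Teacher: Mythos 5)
Your first step (characterizing the support of $\mathcal{P}_0$ and invoking absolute continuity) is sound and matches the paper's opening move. The genuine gap is in your second step, where you claim that any two adversary \emph{attempts} are a.s.\ separated by at least $\bar{t}+1$ steps. The inequality $\mathbb{P}_1(\nu_k\gamma_k=1,\nu_{k+j}\gamma_{k+j}=1)\geq\lambda^{2}\,\mathbb{P}_1(\nu_k=1,\nu_{k+j}=1)$ is false in general: with $\lambda_a=1$ the sensor observes $\gamma_k$ exactly (through $\eta_k^s$), so a malicious policy of the form (\ref{policy2}) may condition $\nu_{k+j}$ on $\gamma_k$, and the event $\{\nu_{k+j}=1\}$ can be disjoint from $\{\gamma_k=1\}$. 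Concretely, take $\tilde{\tau}_{ij}=0$ for $i\leq\bar{t}$ and $\tilde{\tau}_{ij}=1$ for $i>\bar{t}$: the sensor stays silent for $\bar{t}+1$ steps after each \emph{legitimate success} but re-transmits on the very next step after each legitimate dropout. This policy produces exactly the same observation law $\mathcal{P}_0$ at the legitimate estimator (KL divergence zero), yet it places back-to-back attempts with probability $1-\lambda>0$ after every transmission. So attempts are not spaced; only successful receptions at the legitimate estimator are. Your downstream tail bound $\mathbb{P}(T_{\mathrm{adv}}>n)\geq(1-\lambda_e)^{\lfloor n/(\bar{t}+1)\rfloor}$ therefore fails: during a burst of consecutive attempts the adversary gets many independent chances in quick succession.

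The correct deduction from the support condition — and what the paper's proof actually establishes — is only that $\tilde{\tau}_{ij}=0$ for all $j$ whenever $i\leq\bar{t}$, i.e.\ forced silence for $\bar{t}+1$ steps \emph{after a legitimate success}. To turn this into divergence of $J_e$ you must therefore intersect with the event that every transmission is received by the legitimate estimator (forcing the reset of $\eta_k^s$ and hence the spacing) \emph{and} missed by the adversary, which costs $\lambda(1-\lambda_e)$ per transmission rather than your $1-\lambda_e$. This yields the geometric ratio $\lambda(1-\lambda_e)\,|\lambda_{\max}(A)|^{2(\bar{t}+1)}$ and explains why the factor $\lambda$ appears in the theorem's hypothesis; your observation that only the weaker condition $(1-\lambda_e)\,|\lambda_{\max}(A)|^{2(\bar{t}+1)}>1$ would be needed should have been a warning sign that the spacing claim was too strong. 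Your closing remarks on avoiding the renewal identity are reasonable (a sequential-conditioning bound on $\mathbb{P}(\eta_k^e\geq n)$ suffices and is cleaner than a coupling), but they do not repair the incorrect spacing step.
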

\begin{proof}
See Appendix \ref{pr_th6}.
\end{proof}
In the following sections, we will show an opposite conclusion when there are packet dropouts in the ACK channel. In this case, the state $\eta_k$ and $\eta_k^s$ cannot be synchronized, making perfect secrecy not attainable for any reference policy and any detection horizon.

\subsection{Constrained MDP for an unreliable ACK channel} \label{cmdp}

In this section, we will transform the stealthy constraint in (\ref{stealthy_kl}) into a $l_1$-norm constraint on the stationary distribution of a state-action pair, such that Problem \ref{prob1} can be cast into a linear program. Based on this, we prove the non-attainability of perfect secrecy by studying the feasibility of the constrained MDP. 
\begin{prop}
	\label{al_stealthy}
 Define $\omega_s(i,a)$ as the stationary probability of the state-action pair $(\eta_k^s=i,\nu_{k+1}=a)$ before the intrusion and $\rho_s(i,a)$ as that after the intrusion. Given that $\lambda_a<1$, for any stealthy tolerance $\epsilon_{kl}>0$ and any detection horizon $N_r>0$, there exists a scalar $\epsilon_s>0$ such that if 
	\begin{align}
\|\omega_s-\rho_s\|_1\leq \epsilon_s,\label{stealth_refine}
	\end{align}
	Eq. (\ref{stealthy_kl}) holds.
\end{prop}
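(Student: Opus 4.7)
\textbf{Proof plan for Proposition \ref{al_stealthy}.}

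The plan is to prove this by a two-step continuity argument: first show that the marginal distribution $\mathcal{P}_1$ of $o_{k+1:k+N_r}$ depends Lipschitz-continuously on the state–action stationary distribution $\rho_s$ in the $\ell_1$ norm, and then show that the Kullback--Leibler divergence depends continuously on $\mathcal{P}_1$ at the point $\mathcal{P}_1=\mathcal{P}_0$. Composing these two continuity bounds delivers the existence of the claimed $\epsilon_s$.

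For the first step, I would follow the authors' own reduction after Eq.~(\ref{emission}) and aggregate the states $\eta_k^s\geq N_r$ into a single absorbing-like state, reducing to a finite HMM with $N_r+1$ hidden states and binary emissions. Writing
\[
\mathcal{P}_1(o_{1:N_r}) \;=\; \sum_{i_0,\dots,i_{N_r-1}} \rho_s(i_0,\nu_1)\prod_{t=1}^{N_r-1} T_{\rho}(i_{t-1},i_t)\,E_{\rho}(i_t,o_t),
\]
where $T_\rho$ and $E_\rho$ are the transition and emission kernels defined by the induced policy $\tilde\tau_i=\rho_s(i,1)/(\rho_s(i,0)+\rho_s(i,1))$, I would observe that these kernels are rational functions of $\rho_s$ whose denominators are marginals $\pi_i^s=\rho_s(i,0)+\rho_s(i,1)$ already bounded below by a constant $p_{\min}^{(1)}>0$ (since the chain in Fig.~3 is positive recurrent for any admissible reference policy and perturbations leave this lower bound essentially intact). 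A straightforward Lipschitz estimate on each factor, together with the finite length $N_r$ of the product and the summation over finitely many trajectories, yields
\[
\|\mathcal{P}_1-\mathcal{P}_0\|_1 \;\leq\; C_1\,\|\rho_s-\omega_s\|_1
\]
for a constant $C_1=C_1(N_r,\lambda,\lambda_a,p_{\min}^{(1)})$.

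For the second step, let $p_{\min}^{(0)}=\min_{x\in\mathrm{supp}(\mathcal{P}_0)}\mathcal{P}_0(x)>0$, which is strictly positive because the observation alphabet is finite. Using $\log(1+t)\leq t$ and the elementary bound $q\log(q/p)\leq (q-p)+ (q-p)^2/p$ for $q,p>0$, one obtains the standard reverse-Pinsker estimate
\[
\mathcal{KL}(\mathcal{P}_1\|\mathcal{P}_0) \;\leq\; \frac{1}{p_{\min}^{(0)}}\,\|\mathcal{P}_1-\mathcal{P}_0\|_1^2,
\]
valid whenever $\mathcal{P}_1\ll\mathcal{P}_0$. Combining this with the first step and choosing $\epsilon_s \leq \sqrt{p_{\min}^{(0)}\,\epsilon_{kl}}/C_1$ completes the argument.

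The main obstacle is the absolute-continuity condition $\mathcal{P}_1\ll\mathcal{P}_0$: if the reference policy satisfies $\tau_i=0$ for some $i\leq N_r$, an arbitrarily small $\ell_1$ perturbation can activate forbidden transitions, producing observation sequences with $\mathcal{P}_0(x)=0$ but $\mathcal{P}_1(x)>0$ and making $\mathcal{KL}(\mathcal{P}_1\|\mathcal{P}_0)=\infty$. I would handle this by restricting attention, without loss of generality for the feasibility claim, to adversary policies whose support is contained in the support of the reference policy; equivalently, imposing $\rho_s(i,a)=0$ whenever $\omega_s(i,a)=0$. This is a natural constraint in the constrained MDP of Section~\ref{cmdp} because any violation makes the stealthy constraint trivially infeasible, and under this restriction the reverse-Pinsker bound above applies and the conclusion follows.
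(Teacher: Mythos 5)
Your continuity skeleton (Lipschitz dependence of the observation law on the occupation measure, followed by a reverse-Pinsker bound) matches the second half of the paper's argument, but there is a genuine gap exactly where the hypothesis $\lambda_a<1$ is supposed to do its work. You never establish that the reference distribution $\mathcal{P}_0$ has full support on $\{0,1\}^{N_r}$; instead you treat the possible failure of $\mathcal{P}_1\ll\mathcal{P}_0$ as an unavoidable obstacle and patch it by restricting the adversary to policies with $\rho_s(i,a)=0$ whenever $\omega_s(i,a)=0$. That restriction is not in the statement: Proposition \ref{al_stealthy} asserts the implication ``$\|\omega_s-\rho_s\|_1\leq\epsilon_s$ implies (\ref{stealthy_kl})'' for \emph{every} $\rho_s$ in the $\ell_1$-ball, and this unrestricted form is what licenses replacing the KL constraint by the $\ell_1$ constraint in Problem \ref{prob2}. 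Moreover, your justification for the restriction conflates support in state--action space with support in observation space, and the claim that violating it ``makes the stealthy constraint trivially infeasible'' presupposes precisely the degeneracy that $\lambda_a<1$ rules out. The paper's proof spends its main effort here: it analyzes the stationary distribution of the joint chain $(\eta_k,\eta_k^s)$ and shows that when $\lambda_a<1$ the sensor's belief desynchronizes from the true holding time often enough that $\mathbb{P}(\eta_k=i,\nu_{k+1}\gamma_{k+1}=b)>0$ for every $i$ and every $b\in\{0,1\}$, hence $\mathcal{P}_0(x)>0$ for every observation string $x$. With full support in hand, absolute continuity is automatic, your reverse-Pinsker step applies to arbitrary $\mathcal{P}_1$, and no restriction is needed. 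Skipping this step is not cosmetic: for $\lambda_a=1$ the full-support property fails and the proposition is false (this is exactly what Theorem \ref{safe} exploits), so a proof that never substantively invokes $\lambda_a<1$ cannot be complete.

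A secondary issue: your factorization of $\mathcal{P}_1$ as an HMM with hidden state $\eta_k^s$ alone and marginalized policy $\tilde\tau_i=\rho_s(i,1)/(\rho_s(i,0)+\rho_s(i,1))$ is not exact after the intrusion, because the transmission decision then depends on the pair $(\eta_k^s,\eta_k^e)$ and successive observations are correlated through $\eta_k^e$; the Lipschitz estimate should be run on the joint chain (or, as the paper does, on the law of $(\eta_k,\nu_{k+1}\gamma_{k+1})$ expressed directly in terms of the occupation measure).
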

\begin{proof}
See Appendix \ref{proof_pi}.
\end{proof}
According to Proposition \ref{al_stealthy}, we can take (\ref{stealth_refine}) as the stealthy constraint in Problem \ref{prob1}. Define a tensor $\omega$ such that its component $\omega(i,j,a)$ is the stationary probability of the appearance of the state-action pair $(S_k=(i,j),\nu_{k+1}=a)$ with $i,j\in\mathbb{N}, a\in\mathbb{A}$ before the launch of the intrusion. Similarly, we can define $\rho$ as the stationary distribution after the intrusion. Based on $\omega$ and $\rho$, we will show that Problem 1 is equivalent to a linear program. 

Let's look at $\rho$ first. Note from \cite{altman1999constrained} that $\rho$ is referred to as an occupation measure in constrained MDP, which satisfies 
\begin{equation}
\begin{aligned}
&\sum_{a\in\mathbb{A}} \rho(i,j,a) - \sum_{i',j'\in\mathbb{N},a\in\mathbb{A}} \rho(i,j,a) \mathcal{P}_{i',j',a,i,j} = 0,\\
&\sum_{i,j\in\mathbb{N},a\in\mathbb{A}}\rho(i,j,a) = 1\\
&  \rho(i,j,a) \geq 0,~\forall i,j\in\mathbb{N} ~\forall a\in\mathbb{A}.
\end{aligned}\label{ce3} 
\end{equation}
Here $\mathcal{P}_{i',j',a,i,j}$ is a short form of the state transition probability $\mathbb{P}(S_{k+1}=(i,j)\mid S_k=(i',j'),\nu_{k+1}=a)$ in Eq. (\ref{trans6}). For ease of notation, we write Eq. (\ref{ce3}) in a compact form as $\rho \in \Omega$.  Moreover, we can express $J_e$ and $\rho_s$ as linear combinations of $\rho$:
\begin{align}
J_e
= &\lim_{n\rightarrow\infty}\sum_{i\in\mathbb{N}}\sum_{a\in\mathbb{A}}\left(\sum_{j\leq n}\rho(i,j,a)\text{tr}(f^j(\bar{P}))\right),\label{linear}\\
\rho_s(i,a) =& \sum_{j\in\mathbb{N}} \rho(i,j,a).\nonumber
\end{align}
Similarly, we have $\omega\in\Omega$ and $\omega_s(i,a) = \sum_{j\in\mathbb{N}} \omega(i,j,a).$
In addition, to be accordance with Eq. (\ref{policy}), $\omega$ should satisfy that
\begin{equation}
\tau_i = \frac{\omega(i,j,a=1)}{\sum_{a\in\mathbb{A}}\omega(i,j,a)},~\forall i,j\in\mathbb{N}.\nonumber
\end{equation}
If the transmission policy $\{\tau_i\}$ is fixed, the resulted Markov chain is ergodic and thus has a unique stationary distribution, indicating that $\omega$, as well as $\omega_s$ are uniquely determined. Accordingly, the optimal malicious policy can be derived from Problem 	\ref{prob2} where the occupation measure $\rho$ is taken as a decision variable. 

\begin{prob}
	\label{prob2}
\begin{align}
\min_{\mathbf{\rho}} ~&~J_e\nonumber\\
\text{s.t.}~~&~  \rho\in\Omega \nonumber\\
&~ \sum_{i\in\mathbb{N},a\in\mathbb{A}}|\sum_{j\in\mathbb{N}} \rho(i,j,a)-\omega_s(i,a)|\leq \epsilon_s.\label{stealthy}
\end{align}
\end{prob}
After introducing auxiliary variables $\epsilon_{i,a}$ with $i\in\mathbb{N},a\in\mathbb{A}$, Eq. (\ref{stealthy}) can be further simplified into a set of linear equalities as
\begin{equation}
\begin{aligned}
&-\epsilon_{i,a}\leq \sum_{j\in\mathbb{N}} \rho(i,j,a)-\omega_s(i,a)\leq \epsilon_{i,a}\\
& \sum_{i\in\mathbb{N},a\in\mathbb{A}} \epsilon_{i,a} \leq \epsilon_s
\end{aligned}.\label{linearity}
\end{equation}
Therefore, Problem \ref{prob2} is a linear program. 

If Problem 	\ref{prob2} is feasible, which means there is
a solution that can make $J_e$ bounded and the constraints satisfied,  for any $\epsilon_s$ and any reference policy $\{\tau_i\}$, then perfect secrecy is not attainable. Thus, the feasibility of this constrained linear program determines the attainability of perfect secrecy. In the following section, we will prove this by truncating the original MDP into a finite-state MDP. The truncation will also be employed in the practical implementation of the malicious policy. 

\subsection{Finite dimensional approximation}
\label{fa}
We truncate the MDP with  $(N_t+1)^2$ states reserved by defining the state $S_k$ as
\begin{equation}
S_k=\left\{\begin{array}{ll}
(\eta_k^s,\eta_k^e) &  \eta_k^s,\eta_k^e<N_t \\
(N_t,\eta_k^e ) &  \eta_k^s\geq N_t,~\eta_k^e<N_t\\
(\eta_k^s,N_t) & \eta_k^s<N_t,~\eta_k^e\geq N_t\\
(N_t,N_t) & \eta_k^s,\eta_k^e\geq N_t
\end{array}\right..\nonumber
\end{equation}
Here $N_t$ is the truncation horizon. The transition probability corresponding to the truncated MDP is revised as
 {\small
	\begin{equation}
	\begin{aligned}
	& \mathbb{P}(S_{k+1}\mid S_k,\nu_{k+1})=\\
	&\left\{
	\begin{array}{lc}
	1 & \begin{array}{c}  S_{k+1}=(\min(i+1,N_t),\min(j+1,N_t))\\
	\nu_{k+1}=0 \end{array}\\
	(1-\lambda\lambda_a)(1-\lambda_e) & \begin{array}{c}
	S_{k+1}=(\min(i+1,N_t),\min(j+1,N_t))\\
	\nu_{k+1}=1 \end{array}\\
	\lambda\lambda_a(1-\lambda_e) & \begin{array}{c}
	S_{k+1}=(0,\min(j+1,N_t))\\
	\nu_{k+1}=1\end{array}\\
	(1-\lambda\lambda_a)\lambda_e& \begin{array}{c} S_{k+1}=(\min(i+1,N_t),0)\\
	\nu_{k+1}=1 \end{array}\\
	\lambda\lambda_a\lambda_e & S_{k+1}=(0,0),~\nu_{k+1}=1
	\end{array}
	\right. .
	\end{aligned}
	\label{trans8}
	\end{equation}	}
\normalsize
Define a tensor $\rho_t\in\mathbb{R}^{(N_t+1)\times (N_t+1)\times 2}$ as the occupation measures for the states of the  truncated MDP. Similar to Eq. (\ref{ce3}), we have $\rho_t$ satisfy that 
\begin{equation}
\begin{aligned}
&\sum_{a\in\mathbb{A}} \rho_t(i,j,a) - \sum_{0\leq i',j'\leq N_t,a\in\mathbb{A}} \rho_t(i,j,a) \mathcal{P}_{i',j',a,i,j} = 0,\\
&\sum_{0\leq i,j\leq N_t,a\in\mathbb{A}}\rho_t(i,j,a) = 1\\
&  0\leq \rho_t(i,j,a) \leq 1,~\forall ~0\leq i,j\leq N_t~\forall a\in\mathbb{A}.
\end{aligned}\label{ce4} 
\end{equation}
Here $\mathcal{P}_{i',j',a,i,j}$ denotes the transition probability in (\ref{trans8}). Denote Eq. (\ref{ce4}) as $\rho_t\in\Omega_{N_t}$ for simplicity in notation. Meanwhile, we limit the number of free variables in the malicious policy as  
\begin{align}
\tau_{ij}^t = \left\{\begin{array}{ll}
\tilde{\tau}_{ij} & 0 \leq i,j< N_t \\
1 & i\geq N_t ~\text{or}~ j\geq N_t
\end{array}\right., \label{policy_trun}
\end{align}
where $\tilde{\tau}_{i,j}$ with $0\leq i,j\leq N_t$ are taken as free variables to be determined and the others are fixed to be $1$. In accordance to (\ref{policy_trun}), $\frac{\rho_t(i,j,a=1)}{\sum_{a\in\mathbb{A}}\rho_t(i,j,a)}=1$ when $i=N_t$ or $j=N_t$, which gives
\begin{align}
& \rho_t(i,j,a=0)=0,\quad i=N_t~\text{or}~j= N_t.\label{rhot21}
\end{align} 
With Proposition \ref{prop_8}, we show that $J_e$ is a linear function of $\rho_t$ and the policy (\ref{policy_trun}) makes $J_e$ bounded for any $N_t$.
\begin{prop}
	The averaged expected error covariance corresponding to the truncated policy in Eq. (\ref{policy_trun}) can be expressed in terms of $\rho_t$ as
	\begin{align}
	J_e
	= &\sum_{i,j\leq N_t,~a\in\mathbb{A}}\rho_t(i,j,a)\text{tr}(f^j(\bar{P}))+\nonumber\\
	&\sum_{i\leq N_t}\rho_t(i,N_t,a=1)\sum_{j=0}^\infty\lambda_e(1-\lambda_e)^{j-N_t}\text{tr}f^j(\bar{P}), \label{J_e_finite}
	\end{align}
	which is bounded for any $N_t\geq 0$ if $\lambda_e\in(1-\frac{1}{|\lambda_{max}(A)|^2},1)$.\label{prop_8}
\end{prop}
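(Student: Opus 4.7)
The plan is to express $J_e$ as a linear functional of $\rho_t$ by carefully handling the states that the truncation at $\eta^e=N_t$ aggregates, and then to verify summability of the resulting series via Lemma \ref{prop_ul}.

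First I would verify that under policy (\ref{policy_trun}) the truncated chain on $\{0,\ldots,N_t\}^2$ is irreducible: forced transmissions on the boundary $\{i=N_t\}\cup\{j=N_t\}$ reset at least one coordinate with positive probability, so a unique stationary occupation measure $\rho_t$ exists. Ergodicity then yields $J_e=\mathbb{E}_{\rho_t}[\text{tr}(f^{\eta_k^e}(\bar{P}))]$, where the expectation is over the true, non-truncated $\eta_k^e$ in stationarity.

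Next I would split this expectation according to whether $\eta_k^e<N_t$ or $\eta_k^e\geq N_t$. For $\eta_k^e=j<N_t$, the truncated state records $\eta_k^e$ exactly and the contribution is $\sum_{i\leq N_t,\,j<N_t,\,a}\rho_t(i,j,a)\,\text{tr}(f^j(\bar{P}))$; appending the $j=N_t,\,a=0$ slice, which vanishes by Eq. (\ref{rhot21}), yields the first sum in the claim.

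The crux of the proof is to identify the conditional distribution of the true $\eta_k^e$ given that the truncated state aggregates $\eta_k^e\geq N_t$. When $\eta_k^e\geq N_t$, policy (\ref{policy_trun}) forces $\nu_{k+1}=1$, so $\eta_{k+1}^e=0$ with probability $\lambda_e$ and $\eta_{k+1}^e=\eta_k^e+1$ otherwise. Hence each sojourn of the chain in $\{\eta^e\geq N_t\}$ has geometric length with parameter $\lambda_e$, and a renewal / expected-occupation-time argument gives
\[
\Pr(\eta_k^e=N_t+n\mid \eta_k^e\geq N_t)=\lambda_e(1-\lambda_e)^n,\quad n\geq 0.
\]
Multiplying by $\sum_{i\leq N_t}\rho_t(i,N_t,a=1)$ and reindexing $j=N_t+n$ reproduces the second sum of the claim. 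For boundedness, the first sum has finitely many terms, each bounded; for the tail $\sum_{n=0}^{\infty}(1-\lambda_e)^n\text{tr}(f^{N_t+n}(\bar{P}))$, Lemma \ref{prop_ul} gives $\text{tr}(f^{N_t+n}(\bar{P}))\leq B_u(\bar{P},N_t+n)$, whose dominant term scales as $|\lambda_{\max}(A)|^{2(N_t+n)}$; thus the series converges precisely when $(1-\lambda_e)|\lambda_{\max}(A)|^2<1$, i.e.\ $\lambda_e\in(1-1/|\lambda_{\max}(A)|^2,1)$, so $J_e$ is finite.

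The main obstacle I expect is step three: the truncation collapses infinitely many original states into $(i,N_t)$, so pinning down the shifted-geometric conditional distribution cannot be read off from the finite truncated balance equations alone and requires either a careful renewal argument on the entry epochs into $\{\eta^e\geq N_t\}$ or, equivalently, solving the balance equations of the countable subchain governing $\eta_k^e$ inside that set.
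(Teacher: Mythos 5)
Your proposal is correct and follows essentially the same route as the paper: split $J_e$ at the truncation boundary, observe that forced transmission for $\eta^e\ge N_t$ makes the stationary mass of $\{\eta^e=N_t+n\}$ decay geometrically with ratio $1-\lambda_e$ (the paper reads this off the balance equations of the untruncated occupation measure, you phrase it as a sojourn-time/renewal argument — the same fact), and invoke Lemma \ref{prop_ul} so that $(1-\lambda_e)|\lambda_{\max}(A)|^2<1$ gives convergence of the tail series. Your bookkeeping of the $j=N_t$ slice is in fact the cleaner version of what the appendix writes (the paper's first sum runs over $j\le N_t-1$ and its tail sum should be read as starting at $j=N_t$).
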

\begin{proof}
	See Appendix \ref{proof_prop8}.
\end{proof}	
This enables us to devise a linear program based on $\rho_t$ as

\begin{prob}
	\label{prob3}
\begin{align}
\min_{\mathbf{\rho}_t,\epsilon_{i,a}} &~J_e\nonumber\\
\text{s.t.}~&~ \rho_t\in\Omega_{N_t},~(\ref{rhot21}) \nonumber\\
&~\text{(Stealthy constraint):}\nonumber\\
&-\epsilon_{i,a}\leq \sum_{0\leq j\leq N_t}\rho_t(i,j,a)-\omega_s(i,a)\leq \epsilon_{i,a}, 0\leq i< {N_t}\label{s1}\\
&-\epsilon_{N_t,a}\leq \sum_{0\leq j\leq N_t}\rho_t(N_t,j,a)-\sum_{N_t}^\infty\omega_s(i,a)\leq \epsilon_{N_t,a}\nonumber\\
&\sum_{0\leq i\leq {N_t},~a\in\mathbb{A}} \epsilon_{i,a} \leq \epsilon_s.\label{s2}
\end{align}
\end{prob}
Feasibility of Problem \ref{prob3} is analyzed in the following theorem.
\begin{thm} \textbf{(Feasibility)}:
	Given that $\lambda_a<1$ and $\lambda_e\in(1-\frac{1}{|\lambda_{max}(A)|^2},1)$, there exists an integer $N_t$ such that Problem \ref{prob3} is ensured to be feasible for
	any stealthy tolerance $\epsilon_s>0$. \label{feasibility}
\end{thm}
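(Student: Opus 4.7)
The plan is constructive: I would exhibit a malicious policy whose induced occupation measure $\rho_t$ lies in the feasible region of Problem \ref{prob3} for a sufficiently large truncation horizon $N_t$. The natural candidate is the \emph{mimicking} policy $\tilde{\tau}_{ij}=\tau_i$ for $0\le i,j<N_t$, together with $\tilde{\tau}_{ij}=1$ whenever $i=N_t$ or $j=N_t$, which is compatible with Eq. (\ref{policy_trun}). Under this policy the coupled chain $(\eta_k^s,\eta_k^e)$ on $\{0,\dots,N_t\}^2$ is irreducible and aperiodic: boundary states transmit with probability one, so $(0,0)$ is reachable from every state with positive probability via $\lambda\lambda_a\lambda_e>0$, and $(0,0)$ has a self-loop. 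Its unique stationary distribution together with the policy defines a $\rho_t$ that by construction lies in $\Omega_{N_t}$ and satisfies (\ref{rhot21}); Proposition \ref{prop_8} additionally guarantees $J_e<\infty$ since $\lambda_e>1-1/|\lambda_{\max}(A)|^2$.

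The technical core is to show $\|\rho_s-\omega_s\|_1\le \delta(N_t)$ with $\delta(N_t)\to 0$. I would first argue that the discrepancy between the one-step transmission rule under the candidate policy and under the reference policy $\tau_i$ is supported on the boundary events $\{\eta_k^e=N_t\}$ and $\{\eta_k^s=N_t\}$, and is pointwise bounded by
$$\bigl|\mathbb{E}[\tilde{\tau}_{i,\eta_k^e}\mid \eta_k^s=i]-\tau_i\bigr|\le (1-\tau_i)\,\mathbb{P}(\eta_k^e=N_t\mid \eta_k^s=i).$$
Since the adversary transmits with probability one at $\eta^e=N_t$ and each transmission reaches the adversary with probability $\lambda_e>0$, the return time of $\eta_k^e$ to zero is stochastically dominated by a geometric random variable, giving $\mathbb{P}(\eta_k^e=N_t)$ a geometric decay in $N_t$. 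A coupling argument for finite ergodic Markov chains then lifts this local perturbation of the kernel into a total-variation bound on the stationary marginal over $\eta_k^s$, and the same reasoning controls the aggregated bucket at $i=N_t$ provided $\sum_{i\ge N_t}\omega_s(i,a)\to 0$, which follows from Proposition \ref{prop_ns} whenever the reference policy is admissible (i.e., yields bounded $J_l$, hence a stationary distribution with summable tail).

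Putting the ingredients together, for any $\epsilon_s>0$ pick $N_t$ with $\delta(N_t)\le \epsilon_s$; take $\epsilon_{i,a}=|\rho_s(i,a)-\omega_s(i,a)|$ for $i<N_t$ and $\epsilon_{N_t,a}=\bigl|\rho_s(N_t,a)-\sum_{i\ge N_t}\omega_s(i,a)\bigr|$. Then (\ref{s1}) holds by definition, (\ref{s2}) holds because $\sum_{i,a}\epsilon_{i,a}\le \delta(N_t)\le \epsilon_s$, and $(\rho_t,\{\epsilon_{i,a}\})$ is a feasible point of Problem \ref{prob3}.

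The step I expect to be the main obstacle is the perturbation/coupling argument converting the pointwise kernel discrepancy into a genuine $l_1$ bound on the stationary marginal, uniformly over the (possibly heavy-tailed) reference policy. The cleanest route I see is to couple two copies of the joint chain that agree except on $\{\eta_k^e=N_t\}\cup\{\eta_k^s=N_t\}$, and to bound the coupling time using the forced transmission on the boundary together with the geometric recurrence of $\eta_k^e$ to zero guaranteed by $\lambda_e>0$; some care is needed because the reference chain for $\eta_k^s$ is only known to satisfy the mild condition in Eq. (\ref{sufficient}), so the tail bound for $\eta_k^s$ must be extracted from admissibility rather than assumed.
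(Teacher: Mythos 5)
Your proposal follows the same constructive skeleton as the paper's proof: the candidate malicious policy is exactly the paper's mimicking policy (\ref{candidate_trun}) ($\tilde{\tau}_{ij}=\tau_i$ on the interior, forced transmission on the boundary), boundedness of $J_e$ is delegated to Proposition \ref{prop_8}, and feasibility reduces to showing that the induced marginal $\rho_s$ converges to $\omega_s$ in $l_1$ as $N_t\rightarrow\infty$. Where you diverge is in how that convergence is established. The paper avoids any coupling or perturbation-of-kernel argument: it observes that $\omega_b$ and $\rho_b$ solve the \emph{same} invertible linear system $M_c(\cdot)=N$ (balance equations, normalization, policy-consistency), differing only in the right-hand-side entries attached to the boundary conditions (\ref{sign1})--(\ref{sign2}), which equal the tail masses $\sum_{j\geq N_t}\omega(i,j,a=0)$ and $\sum_{i\geq N_t}\omega(i,j,a=0)$; it then kills those tails by showing from (\ref{stationary1}) that $\pi^e_{j+1}=\pi^e_j-\frac{\lambda_e\pi_{0,j+1}}{\lambda\lambda_a(1-\lambda_e)}\leq\pi^e_j$, so the summable, monotone sequence $\pi^e_j$ has vanishing tail. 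This buys two things your route does not: (i) no rate is needed, so no geometric-decay claim; and (ii) the argument works for an arbitrary reference policy with a proper stationary distribution. Your geometric stochastic domination of the return time of $\eta^e_k$ to zero is the soft spot: between boundary visits, transmissions occur at rate $\tau_{\eta^s_k}\lambda$, and an admissible reference policy need only satisfy the product condition (\ref{lim}), which does not bound $\tau_i$ away from zero state by state; the forced transmission at $\eta^e=N_t$ controls excursions \emph{above} $N_t$ but not the probability of reaching $N_t$ in the first place. The coupling step you correctly flag as the main obstacle is precisely what the paper circumvents with the linear-system identification, so if you pursue your route you must either restrict to reference policies satisfying the uniform lower bound (\ref{sufficient}) or replace the geometric tail by the paper's monotonicity argument. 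One further point both treatments should make explicit: the stealthy constraint (\ref{s1}) compares $\sum_j\rho_t(i,j,a)$ against $\omega_s(i,a)$ computed from the \emph{untruncated} chain (\ref{pi_e}), so one also needs the truncated reference occupation measure to converge to $\omega_s$, which again follows from the vanishing tails.
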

\begin{proof}
	See Appendix \ref{prtoThm}.

Denote the solution of Problem \ref{prob3} as $\rho_t^c$. The transmission probability $\tilde{\tau}_{ij}$ in (\ref{policy_trun}) can be determined as
\begin{align}
\tilde{\tau}_{i,j} = & \frac{\rho_t^c(i,j,a=1)}{\sum_{a\in\mathbb{A}}\rho_t^c(i,j,a)},~\forall~ 0\leq i, j< N_t.\nonumber
\end{align}
Next, we will show that by tuning the truncation horizon $N_t$, the policy derived from the finite-dimension linear program in Problem \ref{prob3} can be made arbitrarily close to that of the infinite-dimension linear program in Problem 	\ref{prob2}. 
\begin{prop}
	\textbf{($\epsilon$-optimality of the truncated policy):}  Denote the optimal policy derived from the infinite dimension linear program in Problem \ref{prob2} as $\mathbf{\tau}^\star$ and the optimal truncated policy derived from Problem \ref{prob3} as $\mathbf{\tau}^c$.
	Then, we have 
	\begin{align}
	J_e(\mathbf{\tau}^c)	 \leq J_e(\mathbf{\tau}^\star) + \epsilon_g(N_t).\label{gap}
	\end{align}
	In addition, with $\mathbf{\tau}_c$ implemented, the stealthy constraint in (\ref{stealth_refine}) satisfies that
	\begin{align}
	\|\omega_s-\rho_s\|_1\leq \epsilon_s + \epsilon_\omega(N_t).\label{sc_s}
	\end{align}
	Here $\lim_{N_t\rightarrow \infty} \epsilon_g(N_t)=0$ and $\lim_{N_t\rightarrow \infty} \epsilon_\omega(N_t)=0$.\label{eO}
\end{prop}
\begin{proof}
	See Appendix \ref{proof_eO}. 
\end{proof}
Combining Theorem \ref{feasibility} and Proposition \ref{eO}, we can conclude the non-attainability of perfect secrecy as follows.
\begin{coro}
	\textbf{(Non-attainability of ``perfect secrecy")}:
	Given that $\lambda_a<1$ and $\lambda_e\in(1-\frac{1}{|\lambda_{max}(A)|^2},1)$, perfect secrecy cannot be attained for the legitimate estimator since for any reference policy, any detection horizon $N_r$ and any stealthy tolerance $\epsilon_{kl}$, there exists a malicious policy that can satisfy the stealthy constraint and make $J_e$ bounded. 
\end{coro}

Next, we will look at the problem from the legitimate estimator's point of view and explore the design of the optimal reference transmission policy for privacy enhancement. 

\section{Synthesis of the optimal reference policy}
\label{sec5}
In this section, we propose a resilient transmission strategy for the legitimate estimator that can reduce the information leakage to the adversary. Since in practice it is hard to design and implement a policy with parameters of infinite-dimension, we consider a suboptimal policy in the form of
\begin{align}
\mathbb{P}(\nu_{k+1}=1 \mid \eta_k^s=i) = \left\{\begin{array}{cc}\tau_i & i< N_r\\
1 & i\geq N_r\end{array}
\right..\label{policy_t}
\end{align}
It can be verified from Proposition \ref{prop0} that  (\ref{policy_t}) can ensure perfect secrecy for unstable systems before the intrusion, provided that $N_r$ is selected large enough. We will model the interactive decision-making process between the legitimate estimator and the adversary as a Stackelberg game and then transform it into a bilevel program.
\subsection{Formulation of a Stackelberg game}
Consider that the legitimate estimator desires to devise a reference policy that can minimize a linear combination of the AEEC before and after the intrusion, denoted as $J_c$, where
\begin{align}
J_c=\alpha\Big(\beta_1 \hat{J}_{u}- (1-\beta_1) \hat{J}_{e}\Big)+(1-\alpha)\Big(\beta_2 J_{u}- (1-\beta_2) J_{e}\Big),\nonumber
\end{align} 
with  $\alpha,\beta_1,\beta_2\in[0,1]$. 
$J_{u}$ and $J_{e}$ are performance indexes for the legitimate estimator and for the adversary corresponding to the malicious policy, while 
$\hat{J}_{u}$ and $\hat{J}_{e}$ are the performance indexes corresponding to the reference policy.  
 Since the stealthy constraints in (\ref{s1}) and (\ref{s2}) depend on both of the malicious policy and the reference policy, the two problems are coupled. Considering that the legitimate estimator has a priority in determining the reference policy, we can model this interactive decision-making process as a Stackelberg game where the legitimate estimator acts as the leader and the adversary acts as the follower \cite{kar2017trends}, as described in Problem \ref{prob5}. 

\begin{prob}
	\label{prob5}
\begin{align}
\min_{\text{Reference policy}} ~& J_c\nonumber  \\
\text{s.t.}\quad\quad~
&\text{malicious policy} =\arg \min~ J_e\nonumber\\
& \quad\quad\quad\quad\quad\quad\quad\quad\quad\text{s.t.} ~ \quad \text{ stealthy constraint}.\nonumber
\end{align}
\end{prob}
Next, we will detail Problem \ref{prob5} as a bilevel program. 

Let the tensor $\omega_t\in\mathbb{R}^{(N_t+1)\times (N_t+1)\times 2}$ be the occupation measure of the truncated MDP before the intrusion. As discussed in the proof of Theorem \ref{feasibility}, we have \begin{equation}
\begin{aligned}&\omega_t\in\Omega_{N_t}\\
\sum_{0\leq i,j\leq{N}_t,a\in\mathbb{A}}& \omega_t(i,j,a)= 1.
\end{aligned}\label{equation_omega}
\end{equation} and for all $0\leq i, j\leq N_t$, we have
\begin{align}
&\begin{aligned}
 &\frac{\omega_t(i,j,a=1)}{\omega_t(i,j,a=0)+\omega_t(i,j,a=1)} =\tau_i ~~ 0\leq i< N_r   \\
 & \frac{\omega_t(i,j,a=1)}{\omega_t(i,j,a=0)+\omega_t(i,j,a=1)} =1 ~~ N_r\leq i < N_t.
 \end{aligned}\label{uniform},\\
&0\leq \omega_t(i,j,a)\leq 1. \label{in_omega}
\end{align}
We rearrange the tensor $\omega_t$,
into a vector $\omega_b\in\mathbb{R}^{2(N_t+1)^2}$ by defining
\begin{equation}
\omega_b\Big(2(N_t+1)i+2j+a\Big) = \omega_t(i,j,a),\label{vector}
\end{equation} 
such that Eq. (\ref{equation_omega})-(\ref{in_omega}) can be written as
\begin{align}
M_a\omega_b=&m_a\label{eq_1}\\
M_b(\tau)\omega_b=&m_b\label{eq_2}\\
M_c\omega_b\leq& m_c,\label{in_3}
\end{align}
Here $M_a\in\mathbb{R}^{(N_t+1)^2\times 2(N_t+1)^2}$. $M_b(\tau)\in\mathbb{R}^{(N_t+1)^2}$ is dependent on the transmission probability $\tau$ and $M_c\in\mathbb{R}^{4(N_t+1)^2\times 2(N_t+1)^2}$. The vectors $m_a$, $m_b$ and $m_c$ are arranged in proper dimensions. In addition, similar to the derivation of $J_e$ in (\ref{J_e_finite}), we can write $\hat{J}_u$ and $\hat{J}_e$ as linear functions of $\omega_b$:
\begin{equation}
\begin{aligned}
\hat{J}_u=c_u\omega_b \quad \hat{J}_e=c_e\omega_b,
\end{aligned}\label{obj}
\end{equation}
where $c_u,~c_e\in\mathbb{R}^{1\times2(N_t+1)^2}$.

In the same way, we rearrange the tensor $\rho_t$ into a vector $\rho_b$ according to (\ref{vector}).  
Since $\rho_t$ also satisfies (\ref{equation_omega}) and (\ref{in_omega}), we have 
 \begin{align}
 M_a\rho_b=&m_a\label{eqn_1}\\
 M_c\rho_b\leq& m_c.\label{inn_3}
 \end{align}
From Eq. (\ref{policy_trun}),  we have $
 \rho_t(i,N_t,a=0)=0$ and 
 $\rho_t(N_t,j,a=0)=0$
 for all $0\leq i,j\leq N_t$, which are denoted as 
 \begin{equation}
 M_d\rho_t=\mathbf{0},\label{eq}
 \end{equation}
 with $M_d\in\mathbb{R}^{(2N_t+1)\times 2(N_t+1)^2}$. Moreover, by defining $\epsilon=[\epsilon_{0,0},~\epsilon_{0,1},\dots,\epsilon_{N_t,1}]^\top$ and  $\rho_{\epsilon} = [\rho_b^\top,~\epsilon^\top]^\top$, 
 we can write the stealthy constraint in (\ref{s1}) and (\ref{s2}) as
 \begin{equation}
 M_{21}\rho_\epsilon+M_{22}\omega_b\leq m_2,\label{primal_2}
 \end{equation}
 with $M_{21}\in\mathbb{R}^{(N_t+2)\times (2(N_t+1)^2+N_t+1)}$ and $M_{22}\in\mathbb{R}^{(N_t+2)\times 2(N_t+1)^2}$. We may also combine Eq. (\ref{eqn_1})-(\ref{eq}) into
 \begin{equation}
 M_1\rho_{\epsilon} \leq m_1,\label{primal_1}
 \end{equation}
 with $M_1=[M_a^\top,-M_a^\top,M_c^\top,M_d^\top,-M_d]^\top$ and $m_1=[m_a^\top,-m_a^\top,m_c^\top,-m_c^\top,\mathbf{0}_{{\tiny{(4N_t+2)\times 1}}}^\top]^\top$.
Similar to (\ref{obj}), we write the performance indexes of the malicious policy $J_u$ and $J_e$ as
\begin{equation} J_u=c_u\rho_b \quad J_e=c_e\rho_b. \nonumber
\end{equation} Define $\tau=[\tau_0,\dots,\tau_{N_r-1}]$. Problem \ref{prob5} is now cast into a bilevel program with respect to $\tau$, $\omega_b$, and $\rho_{\epsilon}$ ($\rho_b$ is a part of $\rho_\epsilon$).

\begin{prob}
	\label{prob6}
\begin{align}
\min_{\tau,\omega_b,\rho_\epsilon} &~ J_c=\alpha\Big(\beta_1c_u\omega_b-(1-\beta_1)c_e\omega_b\Big)+(1-\alpha)\times\nonumber\\
&\quad\quad~\Big(\beta_2c_u\rho_b-(1-\beta_2)c_e\rho_b\Big).\nonumber\\
\text{s.t.}~&~ 
(\ref{eq_1})-(\ref{in_3}) \nonumber\\
&~[\rho_b^\top ~\epsilon^\top ]^\top= \arg\min_{\rho_\epsilon} ~[~c_e ~\mathbf{0}~]\rho_{\epsilon}\nonumber\\
&\quad\quad\quad\text{s.t.}~~(\ref{primal_2}),(\ref{primal_1}). \nonumber
\end{align}
\end{prob}
\subsection{Properties of the optimal solution}
In this section, we will show that the optimal reference policy is a deterministic policy.

Note that the lower level of Problem \ref{prob6} is linear. Its dual problem should be in the form of
\begin{align}
\max_{y} ~&~ y(\left[\begin{array}{cc} m_1^\top & m_2^\top \end{array}\right]^\top-\left[\begin{array}{cc}\mathbf{0}^\top&M_{22}^\top\end{array}\right]^\top\omega_b) \nonumber\\
\text{s.t.}~~ & ~ y\left[\begin{array}{c}M_1\\M_{21}\end{array}\right]\geq [c_e~\mathbf{0}].\label{dual_1}
\end{align}
According to the duality theorem, the duality gap of this linear program should be zero, i.e. 
\begin{align}y(\left[\begin{array}{cc} m_1^\top & m_2^\top \end{array}\right]^\top-\left[\begin{array}{cc}\mathbf{0}^\top &M_1^\top\end{array}\right]^\top\omega_b) =c_e\rho_b.\label{duality}
\end{align}
Based on this, we can replace the lower level with the inequalities (\ref{primal_1})-(\ref{duality}) and transform Problem \ref{prob6} into a one-level problem as

\begin{prob}
	\label{prob7}
\begin{align}
\min_{\omega_b,\tau,y,\rho_\epsilon} &~ J_c\nonumber\\
\text{s.t.}~~~&~(\ref{eq_1})-(\ref{in_3}), ~(\ref{primal_2})-(\ref{duality}). \nonumber
\end{align}
\end{prob}
Note that Eq. (\ref{duality}) is bilinear with respect to the decision variables $y$ and $\omega_b$. Eq. (\ref{uniform}) is bilinear with respect to $\tau$ and $\omega_b$. 
The objective function and remained constraints are convex. If we define $z_1=[\tau^\top,~y^\top]^\top$ and $z_2=[\omega_b^\top,~\rho_\epsilon^\top]^\top$, we may notice that Problem \ref{prob7} is biconvex with respect to $z_1$ and $z_2$ (Problem \ref{prob7} is convex with $z_1$ if $z_2$ is fixed and vice versa). Moreove, when $z_2$ is fixed, Problem \ref{prob7} becomes a minimization of a concave function. According to \cite{zwart1974global}, the optimal value of $z_1$ are among the extreme points. Since $\tau$ is a component of $z_1$ and both $\tau_i=0$ and $\tau_i=1$ are attainable for all $0\leq i<N_r$. We can conclude that either $\tau_i=0$ or $\tau_i=1$ for all $0\leq i<N_r$. This property is summarized in Proposition \ref{extreme}. 
\begin{prop} \textbf{(Determinacy):}
	The optimal reference transmission policy is a deterministic policy with  $\tau_i=0$ or $\tau_i=1$ for $\forall i\in[0,N_r-1]$. \hfill$\blacksquare$ 
\label{extreme}
\end{prop}

Proposition \ref{extreme} indicates that Problem \ref{prob7} is essentially an integer programming, which has  $2^{N_r}$ candidate solutions. Therefore, the optimal solution can be easily obtained by enumerating all the candidate solutions when $N_r$ is small. In the following section, we will present a branch-and-bound algorithm for the case that $N_r$ is relatively large. 
 
\subsection{A depth-first branch-and-bound algorithm}

Generally, bilevel programming is an NP-hard problem, for a special class of which, the bilevel linear program, penalty function approaches have been invented to derive a global optimum \cite{white1993penalty} or a local optimum \cite{anandalingam1990solution}.
Note that Problem \ref{prob6} is a typical linear bilevel program without the bilinear constraint in Eq.  (\ref{uniform}). In view of this, in this section, we will present a branch-and-bound algorithm to gradually remove the bilinear constraints in (\ref{uniform}), such that Problem \ref{prob7} is in the form of a standard bilevel linear program and can be solved by methods such as \cite{anandalingam1990solution}\cite{white1993penalty}. 

Briefly, we formulate a binary tree with $N_r$ layers and $2^{N_r}$ nodes. In layer $i$, each node denotes a choice of $\tau_i$. As exemplified in Fig. \ref{bab_illus}, the round nodes denote $\tau_i=0$ and the square nodes denote $\tau_i=1$.  Hence, each path in the binary tree gives a feasible solution to Problem \ref{prob7}. With this depth-first branch-and-bound algorithm summarized in Algorithm \ref{alg:branch}, we will search this tree from the left to the right while pruning redundant branches to accelerate the search. 

In Algorithm \ref{alg:branch}, we use 'layer' as an index of the depth of the search. When the search proceeds to $\text{layer}+1$,  $\tau_0,\dots,\tau_{\text{layer}}$ are fixed and $\tau_{\text{layer}+1},\dots,\tau_{N_r-1}$ are to be determined. To decide whether we should go on to visit the subsequent nodes, we need a lower bound on this specific branch (for example, the branch highlighted in blue in Fig. \ref{bab_illus},(a)). Compare this lower bound, denoted as $\underline{J}$, with the candidate optimum $J_{\min}$. If $\underline{J}\geq J_{\min}$, there is no need to search deeper since no solution can outperform the candidate optimum. Otherwise, the subsequent nodes should be visited from the left branch to the right branch in turn. Note that,  if the bilinear constraints in (\ref{uniform}) are removed for $i+1\leq \text{layer} <N_r$, the feasible region is enlarged and the optimal value to the relaxed problem should get smaller, thus, providing a lower bound for the original problem. In particular, we show the relaxed problem in Problem \ref{prob8} which is equivalent to a standard bilevel linear program. 

\begin{prob}
	\label{prob8}
\begin{align}
\underline{J}=\min_{\omega_b,y,\rho_\epsilon} &~ J_c\nonumber\\
\text{s.t.}~~~&~ 
(\ref{eq_1}),(\ref{in_3}),~(\ref{primal_1})-(\ref{duality})\nonumber\\
&~ \tau_i=\frac{\omega(i,j,a=1)}{\omega(i,j,a=0)+\omega(i,j,a=1)} \label{p8}\\
& \quad\quad\quad\quad\quad\quad \quad\quad\quad\quad i=0,\dots,\text{layer},~\forall j\leq N_t.\nonumber
\end{align}
\end{prob}
 The algorithm terminates when the rightmost branch is either traversed or pruned. 
 
 Next, we will show its efficiency with numerical examples in the following section.

\begin{figure}[t]
	\centering
	\includegraphics[width=0.4\textwidth]{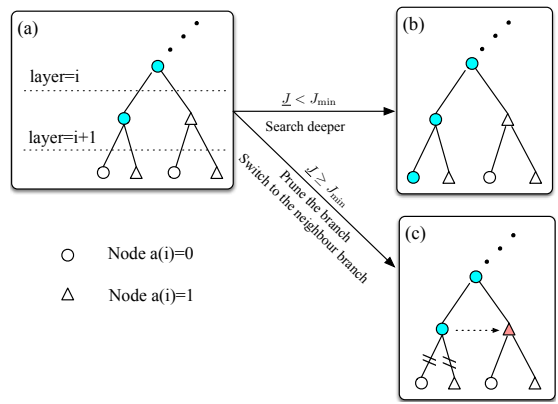}
	\caption{An illustration of the depth-first branch-and-bound algorithm.} \label{bab_illus}
\end{figure}

\begin{algorithm}[h]
	\caption{ A depth-first branch-and-bound method}
	\label{alg:branch}
	\begin{algorithmic}[1]
		\State Initialization: $\text{layer}=0$, $\tau_{\text{layer}}=0$, $\text{end}_\text{id}=0$
		\State Initialize $J_{\min}$ a large number
		\State Set the weighting parameters $\alpha$, $\beta_1$ and $\beta_2$, and the stealthy tolerance $\epsilon_s$
		\While{$\text{end}_\text{id}=0$}
		\While{$\text{layer}<N_r$}
		\State Compute $\underline{J}$ from Problem \ref{prob8}
		\If{$\underline{J}>J_{\min}$} \Comment{\scriptsize pruning}
		\normalsize
		\If{$\min_{i\leq \text{layer}} \tau_i=0$}
		\State $\text{layer}=\arg\max_{\tau_i=0,i\leq\text{layer}} i$
		\State $\tau_{\text{layer}}=1$
 \Comment{\scriptsize switch to the neighbour branch}
		\normalsize
				\State Break
		\Else
		\State Set $\text{end}_{id}=1$ \Comment{\scriptsize the rightmost branch is pruned, exit}
		\normalsize
		\State Break
		\EndIf
		\EndIf
		\State $\text{layer}=\text{layer}+1$ \Comment{\scriptsize search deeper}
		\normalsize
		\State $\tau_{\text{layer}}=0$
		\EndWhile
		\State Compute $\omega_b$ by solving Eq. (\ref{eq_1})-(\ref{in_3}), and $\rho_b$ from the lower level in Problem \ref{prob6}
		\State Compute the objective function $J_c$
		\If{$J_c<J_{\min}$} \Comment{\scriptsize update the candidate optimum}
		\normalsize
		\State Set $J_{\min}=J_c$
		\State Set $\textit{Opt}=\tau$ \Comment{\scriptsize store the candidate optimum}
		\normalsize
		\EndIf
		\If{$\min_{i}\tau(i)=1$}\State Set $\text{end}_{id}=1$ \Comment{\scriptsize the rightmost branch is traversed, exit} 
		\normalsize
		\Else
		\State $\text{layer}=\arg\max_{\tau_i=0} i$ \Comment{\scriptsize switch to the neighbour branch}
		\normalsize
		\State $\tau_{\text{layer}}=1$
		\EndIf
		\EndWhile
		\normalsize
		\State \textbf{Output:} The optimal reference policy \textit{Opt}
	\end{algorithmic}
\end{algorithm}

\section{Numerical example}
\label{sec6}
 Consider the linear system in (\ref{sys}) and (\ref{sys2}) with
\begin{align}
A = & \left[\begin{array}{cc}
1.3 & 1\\
0 & 1
\end{array}\right] \quad C= \left[\begin{array}{cc}
1 & 0
\end{array}\right]\quad Q=\left[\begin{array}{cc}
0.01 & 0 \\
0 & 0.01
\end{array}\right]\nonumber
\end{align}
and $R=0.01$. The steady state error covariance $\bar{P}$ can be computed as
\begin{align}
\bar{P} = \left[\begin{array}{cc}
 0.008 & 0.004\\
0.004 & 0.020
\end{array}\right].\nonumber
\end{align}
Set the packet reception probability as $\lambda=0.6$ at the legitimate estimator and $\lambda_e=0.6$ at the adversary. The reception probability of the acknowledgment is set as $\lambda_a=0.95$. The weighting parameters are taken as $\alpha=0.2$, $\beta_1=0.99$ and $\beta_2=0.5$. Here $\beta_1$ is relatively large since $\hat{J}_e$ is often large as proved in Proposition \ref{prop0}. The optimization horizon $N_r$ is taken as $15$, the truncation horizon $N_t=50$, and the stealthy tolerance $\epsilon_s=0.05$. We apply the proposed branch-and-bound algorithm by solving Problem \ref{prob8} with the penalty function approach in \cite{anandalingam1990solution}. The reference policy is derived after $103$ iterations. The solution presents a threshold structure in the form of
\begin{align}
\nu_{k+1}=\left\{\begin{array}{cc}
0 & \eta_k^s< 6\\
1 & \text{else}
\end{array}\right. .\label{thre}
\end{align}
Corresponding to (\ref{thre}), the optimal malicious transmission policy adopted by the adversary is calculated from Problem \ref{prob3} and plotted in Fig. \ref{malicious_policy}. We may notice that the adversary intends to make the sensor transmit more often when its expected error covariance $P_k^e$ is large. 
%We would like to investigate the special structure of the solution. It is particularly difficult to directly draw such a conclusion from the analysis of the bilevel program. Intuitively, this can be interpreted with the monotonicity relationship between the AEEC at the adversary before and after the intrusion. Specifically, given that the averaged transmission times are the same,   when the adversary's AEEC is larger before the intrusion, it is also larger after the intrusion.  
\begin{figure}[t]
	\centering
	\includegraphics[width=0.4\textwidth]{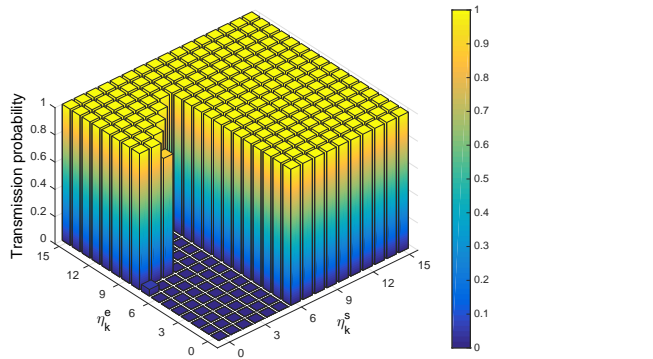}
	\caption{The malicious transmission policy adopted by the adversary. ( $\eta_k^e$ is the holding time at the adversary and $\eta_k^s$ is the holding time at the sensor.)} \label{malicious_policy}
\end{figure}
Next, we would like to investigate the effects of the optimization horizon $N_r$ on the transmission policy and the computation complexity. To this end, we gradually increase $N_r$ from $5$ to $25$. In the first two rows in Table \ref{trun}, we compare the number of iterations required by Algorithm 1 and an exhaustive search. We can see that the number of iterations with the two approaches both increase with $N_r$. However, the proposed branch-and-bound method can efficiently reduce the number of iterations. In row three and row four, we compare the optimal threshold and value of $J_c$ for each $N_r$. We can see that $J_c$ converges with the increase of $N_r$. When $N_r=5$ and $N_r=10$, the proposed algorithm successfully converges to the optimal value confirmed by the exhaustive search.

\begin{table}[h]
	\centering
	\caption{Trade-off between computational complexity and optimality}
	\label{trun}
\scriptsize
\begin{tabular}{ |c||c c c c|  }
	\hline
$\begin{array}{c}	\text{Optimization} \\ \text{horizon}~N_r\end{array}$& $5$ &$10$ &$20$ & $25$\\[5pt]
	\hline
$\begin{array}{c}\textbf{Number of iterations}\\ \text{(Exhaustive search)}
	\end{array}$    & $32$ &$1.0\times 10^3$    &   $1.1\times 10^6$ & $3.4\times 10^7$\\[10pt]
$\begin{array}{c}\textbf{Number of iterations}\\ \text{(Algorithm 1)}
\end{array}$  &8 & 22&    339&     1062  \\[10pt]
$\begin{array}{c}\textbf{Optimal threshold}\\ \text{(Exhaustive search)}
\end{array}$
  & 5   & 6&   N.A. & N.A.\\[10pt]
$\begin{array}{c}\textbf{Optimal threshold}\\ \text{(Algorithm 1)}
\end{array}$ & 5   & 6 &   6 & 6\\[10pt]
\textbf{Optimal value}  & -1.9  &-4.4&   -4.4 &  -4.4\\[5pt]
	\hline
\end{tabular}
\end{table}
\normalsize
\begin{figure*}[htpb]
	\centering
	\includegraphics[width=0.85\textwidth]{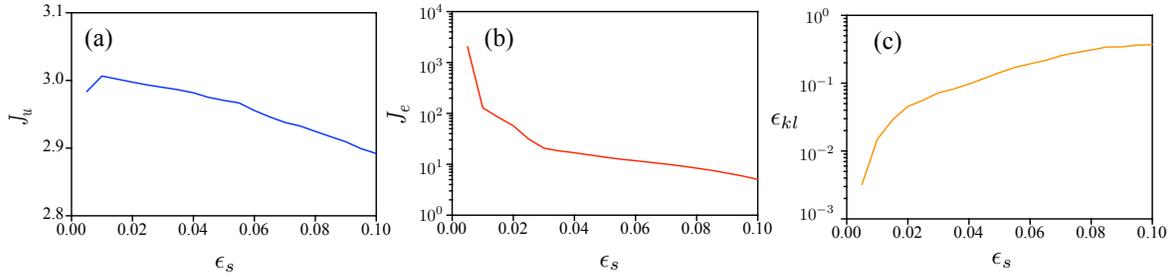}
	\caption{Effects of the stealthy tolerance $\epsilon_s$: (a) the legitimate estimator's performance index $J_u$; (b) the adversary's performance index $J_e$ (c) the Kullback-Leibler divergence $\epsilon_{kl}$.} \label{combo}
\end{figure*}
Next, we would like to investigate the tradeoff between the stealthiness and the estimation performance for the adversary. We fix the truncation horizon as $N_t=50$ and the reference policy as Eq. (\ref{thre}). We vary the stealthy tolerance $\epsilon_s$ from $0$ to $0.1$. The resulted $J_u$ and $J_e$ after the intrusion are plotted in Fig. \ref{combo} (a) and (b). When $\epsilon_s$ increases, both $J_u$ and $J_e$ get smaller. The improvement at the adversary is more significant. Moreover, we calculate the Kullback-Leibler divergence on the distribution of $o_{k+1:k+N_r}$, denoted as $\epsilon_{kl}$, with a Monte Carlo run of length $10^6$. As shown in Fig. \ref{combo}. (c), $\epsilon_{kl}$ monotonically increases with the stealthy tolerance $\epsilon_s$, verifying that restricting $\epsilon_s$ is an efficient way to ensure stealthiness and justifying the stealthy constraint we devised for the adversary. %We may also note that $\epsilon_{kl}$ increases with $N_r$, indicating increasing the length of $o_{k+1:k+N_r}$ may help to improve the detection ability of the legitimate estimator.  

\section{Conclusions}

\label{sec7}
In this paper, we study covariance-based transmission policies for remote state estimation, where an active adversary can hack the sensor, reprogram the transmission policy, and overhear the transmissions. From the adversary's perspective, we derive an optimal stealthy malicious policy to re-schedule the transmission such that the estimation performance at the adversary is optimized. This is done by formulating the transmission process as a constrained Markov decision process where an information-theoretic stealthy constraint is incorporated. We show that the feasibility of the constrained MDP depends on the reliability of the ACK channel. If the ACK channel is reliable, there exists a reference policy making the adversary's AEEC unbounded. Otherwise, the constrained MDP is ensured to be feasible with any stealthy tolerance and any reference policy. To make the system resilient, from the legitimate estimator's perspective, we explore the design of the reference policy to maximally reduce information leakage while ensuring its estimation performance. A bilevel program is formulated for this purpose and a depth-first branch-and-bound algorithm is devised to solve the problem. Numerical examples are presented to illustrate the efficacy of the transmission policies as well as 
the efficiency of the proposed optimization algorithm. Future works will include the study of the structural properties of the optimal solutions to further reduce computation cost.

\appendices
\section{Proof of Lemma \ref{prop_ul}}
\label{prlemm2}

An expression of $f^n(\bar{P})$ can be derived from (\ref{def_f}) as  \begin{align}f^{n}(\bar{P})=A^{n}\bar{P}(A^\top)^{n}+\sum_{i=0}^{n-1}A^iQ(A^\top)^i. \label{funf}
\end{align}
For any positive definite matrix $X\in\mathbb{R}^{n_s}$ and any positive integer $i$, we can prove that
\begin{align}
&\text{tr}\left(A^iX(A^\top)^i\right)=\text{tr}\left(X(A^\top)^iA^i\right)\leq n_s \sigma_{\max}\left(X(A^\top)^iA^i\right)\nonumber\\
& \leq n_s\sigma_{\max}(X)\sigma_{\max}^2\left(A^i\right). \nonumber
\end{align}
Moreover, we have
\begin{align}
\text{tr}\left(A^iX(A^\top)^i\right)
&\geq \lambda_{\max}\left(X(A^\top)^iA^i\right) \nonumber\\
&\geq \lambda_{\min}(X)\lambda_{max}\left((A^\top)^iA^i\right). \nonumber
\end{align}
Replace the matrix $X$ with  $\bar{P}$ and $Q$. Eq. (\ref{f_bound}) is derived.
\section{Proof of Proposition \ref{prop_ns}}
\label{ptoP2}
The sufficient condition in Eq. (\ref{sufficient}) can be easily proved from Theorem 2 in \cite{schenato2008optimal}. The necessary condition in (\ref{lim}) can be derived from Eq. (\ref{pi_e}). Since
	\begin{align}
\sum_{j=0}^\infty \pi_j \text{tr}(f^j(\bar{P}))=&
\frac{1}{1+\sum_{l=0}^\infty \prod_{i=0}^l(1-\tau_i\lambda)}\text{tr}(f^0(\bar{P}))+ \nonumber\\
& \frac{\sum_{l=1}^\infty\prod_{i=0}^{l-1}(1-\tau_i\lambda)\text{tr}(f^l(\bar{P}))}{1+\sum_{l=0}^\infty \prod_{i=0}^l(1-\tau_i\lambda)},\label{finite}
\end{align}
 if $|\lambda_{max}(A)|\geq 1$, $\text{tr}(f^l(\bar{P}))$ increase linearlly or exponentially to infinity according to Lemma \ref{prop_ul}. Moreover, since $J_u$ is bounded, the term $\sum_{l=1}^\infty\prod_{i=0}^{l-1}(1-\tau_i\lambda)\text{tr}(f^l(\bar{P}))$ must be bounded. Thus,  $\lim_{l\rightarrow\infty} \prod_{i=0}^{l-1}(1-\tau_i\lambda)= 0$. 
\section{Proof of Proposition \ref{prop0}} \label{ptprop3}

	(i) Consider a marginally stable system where $|\lambda_{\max}(A)|=1$. Lemma \ref{prop_ul} suggests that $\text{tr}(f^i(\bar{P}))$ linearly increases with $i$. Hence, there exist two scalars $0<c_l<c_u$ and $c_0>0$ such that
	\begin{align}
	c_0+c_li\leq\text{tr}(f^i(\bar{P})) \leq c_0+c_ui,~~\forall i\in\mathbb{N}^+.\label{linear_bound}
	\end{align}
From Eq. (\ref{lim}), we know that to ensure boundedness of $J_u$, there must exist a positive integer $N$ such that the sensor transmits with a probability greater than zero at least once between $k=mN$ and $k=(m+1)N$ for all $m\in\mathbb{N}$. 
To address the worst scenario, we detail the transmission policy as for $k\in[mN,(m+1)N]$, the sensor transmits once with probability $\epsilon_t>0$. With this strategy, the probability that all transmissions are failed at the adversary is $(1-\lambda_e\epsilon_t)^{\lfloor k/N\rfloor}$. Then, from Eq. (\ref{linear_bound}), one can derive that
\begin{align}
\mathbb{P}(\text{tr}\mathbb{E}(P_k^e) \geq c_0+c_uk)\leq (1-\lambda_e\epsilon_t)^{\lfloor k/N\rfloor}.\label{tail}
\end{align}
Using the tail sum formula, we have
	\begin{align}
	\mathbb{E}\left(\frac{\text{tr}\mathbb{E}(P_k^e)-c_0}{c_u}\right) \leq \sum_{k=0}^\infty (1-\lambda_e\epsilon_t)^{\lfloor k/N\rfloor}.\nonumber
	\end{align}
	Since $0<1-\lambda_e\epsilon_t<1$, $ \mathbb{E}\left(\frac{\text{tr}\mathbb{E}(P_k^e)-c_0}{c_u}\right)$ is bounded. Thus, $\text{tr}\mathbb{E}(P_k^e)$ is bounded. Denote this bound as $\bar{b}_e$. It can be easily proved that 
$J_e=\lim_{T\rightarrow\infty}\frac{1}{T} \sum_{k=1}^T\text{tr}(\mathbb{E}(P_k^e)) <\bar{b}_e<\infty$.  

To show that $J_e$ can be arbitrarily large, we assume that the sensor adopts a threshold-type policy as  
	\begin{align}
\mathbb{P}(\nu_{k+1}=1\mid \eta_k^s)=\left\{\begin{array}{cc}
1 & \eta_k^s > \bar{t}\\
0 & \text{else}
\end{array}\right. .\label{thres}
\end{align}
Using the expressions in Eq. (\ref{pi_e}), the stationary probability of the event $\eta_k^s = i,~i\in[0,\bar{t}]$ is $\frac{1}{2+\bar{t}+\frac{1-\lambda}{\lambda}}.$ 
Consider an event $\omega$ where all transmissions made by the sensor are successfully received by the adversary. Since this is an ideal scenario which can only be achieved when $\lambda_e=1$, it provides the minimum $J_e$. Therefore, we have a lower bound of $J_e$ as
\begin{align}
J_e\geq \frac{1}{2+\bar{t}+\frac{1-\lambda}{\lambda}}
\sum_{i=0}^{\bar{t}} c_0+c_li
=\frac{(c_0+c_l\bar{t}/2)(1+\bar{t})}{2+\bar{t}+\frac{1-\lambda}{\lambda}}.\nonumber\end{align}
Hence, by tuning the threshold $\bar{t}$, $J_e$ can be made arbitrarily large.
	
	(ii) If the system is unstable, according to Theorem III.4 in \cite{leong2018transmission}, a threshold type policy in the form of Eq. (\ref{thres}) can make $J_e$ infinite if the threshold $\bar{t}$ satisfies that 
 \begin{align}
\lambda_e < 1-\frac{1}{\lambda\lambda_a | \lambda_{\max}(A)|^{2(\bar{t}+1)}}.\label{condition}
\end{align}
The existence of the threshold is ensured if $\lambda_e<1$.

\section{Proof to Theorem \ref{safe}}

\label{pr_th6}
If $\lambda_a=1$, we have $\eta_k=\eta_k^s$ from Eq. (\ref{eta}), (\ref{gamma}) and (\ref{eta_s}). Given that $N_r>\bar{t}$ and the reference policy is set as (\ref{candidate}), the transmission probability $\tilde{\tau}_{ij}$ must be $0$ for all $j$ when $i\leq \bar{t}$. Otherwise, there will be some observations in the form of $o_k=\{1,\underbrace{0,~0,\dots,0}_{\text{number of zero}<\bar{t}},1\}$, which will never show up when the policy (\ref{candidate}) is applied. 
This will make the Kullback-Leibler divergence in (\ref{stealthy_kl}) not well-defined and violate the stealthy constraint in (\ref{stealthy_kl}). 
Next, we will show that in the case that $\tilde{\tau}_{ij}=0$ for all $j$ and $i\leq \bar{t}$, $J_e$ is unbounded. Since the adversary intends to minimize $J_e$, from Theorem III.3 in \cite{leong2018transmission}, there exists a threshold $N_l$ such that $\tilde{\tau}_{ij}=1$ for $i\geq \bar{t}$ and $j\geq N_l$. Then, similar to the proof to Theorem III.6 in \cite{leong2018transmission}, we consider an event $\omega_e$: after $j$ reaches $N_l$, every transmission is successfully received by the legitimate estimator and unsuccessfully received by the adversary. Then, we have
\begin{align}
J_e>& \mathbb{P}(\omega_e)\frac{1}{K}\sum_{k=1}^\infty \text{tr} \mathbb{E}(P_k^e\mid \omega_e)\nonumber\\
>& \mathbb{P}(j=N_l)\lim_{K\rightarrow\infty}\frac{1}{K}\text{tr}\big(A^K\bar{P}(A^K)^\top\big)\big(\lambda(1-\lambda_e)\big)^{K/(\bar{t}+1)}.\nonumber
\end{align}
From Eq. (\ref{condition}), we may notice that the term $\text{tr}\big(A^K\bar{P}(A^K)^\top\big)\big(\lambda(1-\lambda_e)\big)^{K/(\bar{t}+1)}$ exponentially increases with $K$. Therefore, $J_e$ is unbounded.

\section{Proof of Proposition \ref{al_stealthy}}
\label{proof_pi}

Define a set $\mathbb{O}=\Big\{(\nu_{k+1}\gamma_{k+1},\dots,\nu_{k+N_r}\gamma_{k+N_r})~|~ \nu_{k+i}\gamma_{k+i}\in\{0,1\},i\in[1,N_r]\Big\}.$
First, we will prove that if $\lambda_a<1$, the probability of any observation $o_{k+1:k+N_r}$ in the set $\mathbb{O}$ is greater than $0$, i.e. $\mathbb{P}(o_{k+1:k+N_r})>0$ for any $o_{k+1:k+N_r}\in\mathbb{O}$.
To this end, we will derive the stationary distribution of the pair $(\eta_k,\eta_k^s)$ by modeling its evolution as a Markov chain. Take $s_k=(\eta_k,\eta_k^s)$. From
Eq. (\ref{trans1}), (\ref{eta}) (\ref{gamma}), (\ref{policy}), and (\ref{eta_s}), the transition probability can be derived as
\begin{align}
&\mathbb{P}(s_{k+1}\mid s_k)=\left\{\begin{array}{ll}
\tau_j\lambda\lambda_a & s_{k+1}=(0,0)\\
\tau_j\lambda(1-\lambda_a) & s_{k+1}=(0,j+1) \\
1-\tau_j\lambda & s_{k+1}=(i+1,j+1)
\end{array}\right. ,\label{trans5}
\end{align}
with $s_k=(i,j)$ and $i,j\in\mathbb{N}$. Denote the stationary probability of the state $s_k=(i,j)$ as $\pi^c_{i,j}$. From Eq. (\ref{trans5}), we have
\begin{equation}
\begin{aligned}
\pi_{0,0}^c = & \sum_{i,j\in\mathbb{N}} \pi_{i,j}^c\tau_j\lambda\lambda_a\\
\pi_{0,j+1}^c = & \sum_{i\in\mathbb{N}} \pi_{i,j}^c\tau_j\lambda(1-\lambda_a) \\
\pi_{i+1,j+1}^c = &  \pi_{i,j}^c(1-\tau_j\lambda)
\end{aligned},\label{stationary}
\end{equation}
which  indicates that $\pi_{i,j}^c>0$ for all $j\geq i\geq0$. Then, we have $\mathbb{P}(\eta_k=i)=\sum_{j\in\mathbb{N}}\pi_{i,j}^c>0$ for any $i\in\mathbb{N}$. Since $\mathbb{P}(\eta_k=i,\nu_{k+1}\gamma_{k+1}=0)=\sum_{j=i}^\infty\pi_{i,j}^c(1-\tau_{j}\lambda)$ and  $\mathbb{P}(\eta_k=i,\nu_{k+1}\gamma_{k+1}=1)=\sum_{j=i}^\infty\pi_{i,j}^c\tau_{j}\lambda$, we have \begin{equation}\mathbb{P}(\eta_k=i,\nu_{k+1}\gamma_{k+1}=0)>0,\quad \mathbb{P}(\eta_k=i,\nu_{k+1}\gamma_{k+1}=1)>0. \label{nonzero}
\end{equation}
 Note from Eq. (\ref{eta}) that
\begin{align}
&\mathbb{P}(o_{k+1:k+N_r}=(i_1,\dots,i_{N_r})) \nonumber\\
= &\sum_{l=0}^\infty \mathbb{P}(\eta_k=l)\mathbb{P}(\nu_{k+1}\gamma_{k+1}=i_1|\eta_{k}=l)\dots\times\nonumber\\
&\mathbb{P}\Big(\nu_{k+N_r}\gamma_{k+N_r}=i_{N_r}|\eta_{k}=l,\nu_{k+j}\gamma_{k+j}=i_{j},j\in[1,N_r)\Big),\label{ob}
\end{align}
with $i_1,\dots,i_{n_r}\in\{0,1\}$. Combining 
Eq. (\ref{nonzero}) and (\ref{ob}), we have  $\mathbb{P}(o_{k+1:k+N_r}=(i_1,\dots,i_{N_r}))>0$ for any $(i_1,\dots,i_{N_r})$.  This ensures that the Kullback-Leibler divergence is always well-defined. We can further derive from Eq. (\ref{stationary}) that 
\begin{equation}
\begin{aligned}
&\mathbb{P}(\eta_k=i,\nu_{k+1}\gamma_{k+1}=0)= \\
& \sum_{j\in\mathbb{N}} \omega_s(j,1)\lambda\left[\lambda_a\prod_{l=0}^{i}(1-\tau_l\lambda)+(1-\lambda_a)\prod_{l=1}^{i+1}(1-\tau_{j+l}\lambda)\right]\\
&\mathbb{P}(\eta_k=i,\nu_{k+1}\gamma_{k+1}=1)= \sum_{j\in\mathbb{N}} \omega_s(j,1)\lambda\times\\
&\left[\lambda_a\prod_{l=0}^{i-1}(1-\tau_l\lambda)\tau_i\lambda+(1-\lambda_a)\prod_{l=1}^{i}(1-\tau_{j+l}\lambda)\tau_{j+i+1}\lambda\right]
\end{aligned} \label{pi}
\end{equation}
where $\omega_s(j,1)$ is the stationary probability of $(\eta_k^s=j,\nu_{k+1}=1)$ and $\tau_l =\frac{\omega_s(l,1)}{\omega_s(l,0) + \omega_s(l,1)}$ for $\forall l\in\mathbb{N}$. 
With Eq. (\ref{pi}) and (\ref{ob}), we may see that the mapping between $\mathbb{P}(\eta_k,\nu_{k+1}\gamma_{k+1})$ and $\omega_s$ is continuously differentiable. Therefore, the mapping is locally Lipschitz continuous, i.e. there exists a constant $\mathcal{L}$ such that $\|\mathbb{P}(o_k=(i_1,\dots,i_{N_t}))-\mathbb{P}(o_k'=(i_1,\dots,i_{N_t})))\|_1\leq\mathcal{L}\|\omega_s-\rho_s\|_1$ where $o_k'$ denotes the observations after the intrusion. Since $\mathbb{P}(o_k=(i_1,\dots,i_{N_t}))>0$, the KL divergence can be made arbitrarily small by squeezing $\|\omega_s-\rho_s\|_1$, which completes the proof.

\section{Proof of Proposition \ref{prop_8}}
\label{proof_prop8}
According to the definition of $\rho$ and $\rho_t$, we have
\begin{equation}
\begin{aligned}
& \rho_t(i,j,a) = \rho(i,j,a)\quad 0\leq i,j<N_t \\
& \rho_t(i,N_t,a=1) = \sum_{j=N_t}^\infty \rho(i,j,a=1),\quad 0\leq i<N_t\\
& \rho_t(N_t,j,a=1) = \sum_{i=N_t}^\infty \rho(i,j,a=1),\quad 0\leq j <N_t \\
& \rho_t(N_t,N_t,a=1) = \sum_{i=N_t}^\infty\sum_{j=N_t}^\infty \rho(i,j,a=1),
\end{aligned} \label{rho_t}
\end{equation}
which further gives that
\begin{align}
&\sum_{i=0}^\infty \sum_{a\in\mathbb{A}}\rho(i,j+1,a) =  (1-\lambda_e) \sum_{i=0}^\infty \sum_{a\in\mathbb{A}}\rho(i,j,a),~j\geq N_t \nonumber\\
& \sum_{i=0}^{N_t}\rho_t(i,N_t,a=1) = \sum_{i=0}^\infty \sum_{j=N_t}^\infty \sum_{a\in\mathbb{A}} \rho(i,j,a).\nonumber
\end{align}
When $j\geq N_t$, we have $\sum_{i=0}^\infty \sum_{a\in\mathbb{A}}\rho(i,j,a)=\lambda_e(1-\lambda_e)^{j-N_t} \sum_{i=0}^{N_t}\rho_t(i,N_t,a=1)$.  The averaged expected error covariance can be expressed as
\begin{align}
&\lim_{T\rightarrow\infty}\frac{1}{T} \sum_{k=1}^T\text{tr}(\mathbb{E}(P_k^e)) = \sum_{j=0}^\infty\sum_{i=0}^\infty\sum_{a\in\mathbb{A}}\rho(i,j,a)\text{tr}(f^j(\bar{P}))\nonumber\\
= & \sum_{j\in[0,N_t-1]}\sum_{i\in[0,N_t]}\sum_{a\in\mathbb{A}}\rho_t(i,j,a)\text{tr}(f^j(\bar{P}))+\nonumber\\
&\sum_{i\in[0,N_t]}\rho_t(i,N_t,a=1)\sum_{j=0}^\infty\lambda_e(1-\lambda_e)^{j-N_t}\text{tr}f^j(\bar{P}).\nonumber
\end{align}
Since $\lambda_e>1-\frac{1}{|\lambda_{\max}(A)|^2}$, the term $\sum_{j=0}^\infty\lambda_e(1-\lambda_e)^{j-N_t}\text{tr}f^j(\bar{P})$ is bounded for any $N_t$ from Lemma \ref{prop_ul}.
	\section{Proof of Theorem \ref{feasibility}}
\label{prtoThm} 
We devise a candidate policy in the form of 
\begin{align}
\tau_{ij}^t = \left\{\begin{array}{ll}
{\tau}_{i} & 0 \leq i,j< N_t \\
1 & i\geq N_t ~\text{or}~ j\geq N_t
\end{array}\right., \label{candidate_trun}
\end{align}
where $\tilde{\tau}_{ij}$ is set the same as the transmission probability $\tau_i$ in the reference policy when $0\leq i,j<N_t$. We will show that with this policy applied, there exists a $N_t$ making $J_e$ bounded and the stealthy constraint satisfied. Since Proposition \ref{prop_8} has shown the boundeness, we will focus on the proof of constraint satisfaction. 

Define a tensor $\omega_t\in\mathbb{R}^{(N_t+1)\times (N_t+1)\times 2}$ as the occupation measures of the truncated MDP when the reference policy in (\ref{policy}) is applied. Then, we have $\omega_t\in\Omega_{N_t}$ and \begin{equation}\tau_i=\frac{\omega_t(i,j,a=1)}{\sum_{a\in\mathbb{A}}\omega_t(i,j,a)}\end{equation} for all $0\leq i,j< N_t$. Similar to (\ref{rho_t}), we have
\begin{align}
\omega_t(i,N_t,a=0)=\sum_{j=N_t}^\infty \omega(i,j,a=0)\\
\omega_t(N_t,j,a=0)=\sum_{i=N_t}^\infty \omega(i,j,a=0)
\end{align}
for $0\leq i,j\leq N_t$. We can rearrange the tensor $\omega_t$ into a vector $\omega_b$ such that 
\begin{equation}
\omega_b\Big(2(N_t+1)i+2j+a\Big) = \omega_t(i,j,a),\label{omega_b}
\end{equation} 
and rewrite the above equalities in $\omega_t$ compactly as 
\begin{equation}
M_c \omega_b = N_\omega, \label{equality}
\end{equation}
with $M_c\in\mathbb{R}^{2(N_t+1)^2\times 2(N_t+1)^2}$ being of full rank and thus invertible, and $N_\omega \in\mathbb{R}^{2(N_t+1)^2}$. Similarly, we can prove that $\rho_t\in\Omega_{N_t}$.To be consistent with Eq. (\ref{candidate_trun}), $\rho_t$ must satisfy that \begin{equation}\tau_i=\frac{\rho_t(i,j,a=1)}{\sum_{a\in\mathbb{A}}\rho_t(i,j,a)}\end{equation} for all $i,j< N_t$ and
\begin{align}
\rho_t(i,N_t,a=0)=0\label{sign1}\\
\rho_t(N_t,j,a=0)=0\label{sign2}
\end{align}
for $i,j\leq N_t$. By defining $\rho_b$ as 
\begin{equation}
\rho_b\Big(2(N_t+1)i+2j+a\Big) = \rho_t(i,j,a),\nonumber
\end{equation} 
we may notice that the equality constraints on $\rho_b$ can be written as
\begin{equation}
M_c\rho_b=N_\rho 
\end{equation}
where $N_\rho$ can be adapted from $N_\omega$ by making the items corresponding to (\ref{sign1}) and (\ref{sign2}) from $\sum_{j=N_t}^\infty \omega(i,j,a=0)$ and $\sum_{i=N_t}^\infty \omega(i,j,a=0)$ to $0$. If we can prove that $\sum_{j=N_t}^\infty \omega(i,j,a=0)$ and $\sum_{i=N_t}^\infty \omega(i,j,a=0)$ converge to $0$ with $N_t$, then we have $\lim_{N_t\rightarrow\infty} \|N_\omega-N_\rho\|_1=0$. Since $M_c$ is invertible, we can prove that for any $\epsilon_s$, there exists a $N_t$ such that $
\|\rho_b-\omega_b\|_1\leq\epsilon_s.$

Since the convergence of $\sum_{i=N_t}^\infty \omega(i,j,a=0)$ can be easily obtained from (\ref{pi_e}), we focus on the proof of $\lim_{N_t\rightarrow\infty}\sum_{j=N_t}^\infty \omega(N_t,j,a=0)=0$. Define the stationary distribution of the state $(\eta_k^s=i,\eta_k^e=j)$ with $i,j\in\mathbb{N}$ as $\pi_{i,j}$ before the intrusion. Then, $\omega(i,j,a=0)\leq \pi_{i,j}$ for any $i,j\in\mathbb{N}$.
From Eq. (\ref{policy}) and Eq. (\ref{trans6}), we have
\begin{equation}
\begin{aligned}
\pi_{0,0} = & \lambda\lambda_e \sum_{j=0}^{\infty}\sum_{i=0}^\infty \tau_{i} \pi_{i,j} \\
\pi_{0,j+1} = & \lambda(1-\lambda_e) \sum_{i=0}^\infty \tau_{i}\pi_{i,j} \\
\pi_{i+1,0} = & (1-\lambda)\lambda_e \sum_{j=0}^\infty \tau_{i}\pi_{i,j} \\
\pi_{i+1,j+1} = & [(1-\lambda)(1-\lambda_e)\tau_{i}+1-\tau_{i}]\pi_{i,j}.
\end{aligned} \label{stationary1}
\end{equation}
Define $\pi_{j}^e=\sum_{i=0}^\infty \pi_{i,j}$, we have
\begin{align}
\pi_{j+1}^e = & \pi_{0,j+1}+\sum_{i=0}^\infty \pi_{i+1,j+1} \nonumber\\
= & \lambda\lambda_a(1-\lambda_e)\sum_{i=0}^\infty \tau_i\pi_{i,j} +\sum_{i=0}^\infty [(1-\lambda_a\lambda)(1-\lambda_e)\tau_i\nonumber\\
&+1-\tau_i]\pi_{i,j}\nonumber\\
= & \underbrace{\sum_{i=0}^\infty \pi_{i,j}}_{\pi_j^e}-\lambda_e\underbrace{\sum_{i=0}^\infty \tau_i\pi_{i,j}}_{\frac{\pi_{0,j+1}}{\lambda_a\lambda(1-\lambda_e)}}
=  \pi_{j}^e -\frac{\lambda_e\pi_{0,j+1}}{\lambda\lambda_a(1-\lambda_e)}.
\end{align}
Since $\pi_{0,j+1}\geq0$, we have $\pi_j^e\geq \pi_{j+1}^e$. Considering that $\pi_j^e\geq0$ and $\sum_{j=0}^\infty \pi_j^e=1$, $\pi_j^e$ is a Cauchy sequence and, thus, $\lim_{N_t\rightarrow\infty}\sum_{j=N_t}^\infty \pi_j^e=0$. Moreover, we have $\lim_{N_t\rightarrow\infty}\sum_{j=N_t}^\infty \omega(N_t,j,a=0)\leq\lim_{N_t\rightarrow\infty}\sum_{j=N_t}^\infty\pi_j^e=0 $. The proof is completed.
\end{proof}
	
\section{Proof of Proposition \ref{eO}}
\label{proof_eO}
According to the definition of $\rho_t$, Eq. (\ref{s1}) and (\ref{s2}) are equivalent to $\sum_{i=0}^{N_t-1}\|\omega_s(i,a)-\rho_s(i,a)\|_1+\|\sum_{N_t}^\infty \omega_s(i,a)-\sum_{N_t}^\infty \rho_s(i,a)\|_1\leq \epsilon_s$. We can prove from (\ref{pi_e}) that $\lim_{i=N_t\rightarrow\infty}\sum_{a\in\mathbb{A}}\sum_{i=N_t}^\infty\omega_s(i,a)=0$ and $\lim_{N_t\rightarrow\infty}\sum_{N_t}^\infty\sum_{a\in\mathbb{A}}\rho_s(i,a)=0$. Since  \begin{equation}\begin{aligned}&\|\omega_s-\rho_s\|_1\leq\\ &\underbrace{\sum_{i=0}^{N_t-1}\|\omega_s(i,a)-\rho_s(i,a)\|_1}_{\leq \epsilon_s} +\underbrace{\sum_{a\in\mathbb{A}}\sum_{N_t}^\infty\omega_s(i,a)+  \rho_s(i,a)}_{\text{converge to}~ 0 ~\text{with} ~N_t},\end{aligned} \label{p9}
\end{equation} Eq. (\ref{sc_s}) is proved.  

To prove Eq. (\ref{gap}), we may devise a policy $\tau^f$ as
\begin{align}
\tau^f_{ij}=\left\{\begin{array}{cc}
\tau^\star_{ij} & 0\leq i,j< N_t\\
1 & i\geq N_t ~\text{or}~ j\geq N_t
\end{array}\right..\nonumber
\end{align}
From Proposition \ref{prop_8} and Theorem \ref{feasibility}, both $\tau^f$ and $\tau^\star$ give bounded $J_e$. Therefore, the discrepancy between $\tau^f$ and $\tau^\star$ in $J_e$ diminishes with $N_t$, i.e.
\begin{equation}|J_e(\tau^f)-J_e(\tau^\star)|\leq 
\epsilon_a(N_t)\label{eq1}\end{equation} where $\lim_{N_t\rightarrow\infty}\epsilon_a(N_t)=0$. Denote the value of $\sum_{i=0}^{N_t-1}\|\omega_s(i,a)-\rho_s(i,a)\|_1+\|\sum_{N_t}^\infty \omega_s(i,a)-\sum_{N_t}^\infty \rho_s(i,a)\|_1$ as $\epsilon_u(N_t)$. From (\ref{p9}), we have $\lim_{N_t\rightarrow\infty}|\epsilon_u(N_t)-\epsilon_s|=0$. Thus, the policy $\tau^f$ can be viewed as a feasible solution to Problem	\ref{prob2} with the stealthy tolerance $\epsilon_s$ perturbed with $|\epsilon_u(N_t)-\epsilon_s|$. Based on the local sensitivity analysis for a convex problem (Section 5.6 in \cite{boyd2004convex}), we have \begin{equation}J_e(\tau^c)\leq J_e(\tau^f)+\epsilon_b(N_t)\label{eq2}\end{equation}
with $\lim_{N_t\rightarrow\infty}\epsilon_b(N_t)=0$. Define $\epsilon_g(N_t)=\epsilon_a(N_t)+\epsilon_b(N_t)$. Eq. (\ref{eq1}) and (\ref{eq2}) together give Eq. (\ref{gap}). The proof is complete.

\bibliographystyle{ieeetran}
\bibliography{stealthy.bib}

\end{document}